\theoremstyle{plain}
 \newtheorem{thm}{Theorem}[section]
 \newtheorem{prop}{Proposition}[section]
 \newtheorem{lem}{Lemma}[section]
\theoremstyle{definition}
 \newtheorem{remark}{Remark}[section]
\numberwithin{equation}{section}
\newcommand{\R}{\mathbb{ R}}
\DeclareMathOperator{\diag}{diag}
\DeclareMathOperator{\tr}{tr}
\DeclareMathOperator{\grad}{grad}
\title[Nonholonomic connections and integrability of the rolling ball]{NONHOLONOMIC CONNECTIONS, TIME REPARAMETRIZATIONS,  AND INTEGRABILITY OF THE ROLLING BALL OVER A SPHERE}
\author{Borislav Gaji\' c and Bo\v zidar Jovanovi\' c}
\subjclass[2010]{37J60, 37J35,53B05, 53C22}
\keywords{nonholomic Chaplygin systems, connections, integrability}
\address{\sc Mathematical Institute SANU, Kneza
Mihaila 36, 11000, Belgrade, Serbia}
\email{gajab@mi.sanu.ac.rs, bozaj@mi.sanu.ac.rs}
\begin{document}

\begin{abstract}
We study a time reparametrisation of the Newton type equations on Riemannian manifolds
slightly modifying the Chaplygin multiplier method,
allowing us to consider the Chaplygin method and the Maupertuis principle within a unified framework.
As an example, the reduced nonholonomic problem of rolling without
slipping and twisting of an $n$-dimensional balanced ball over a fixed sphere is considered. For a special inertia operator (depending on $n$ parameters) we prove complete integrability when the radius of the ball is twice the radius of the sphere. In the case of $SO(l)\times SO(n-l)$ symmetry, noncommutative integrability for any ratio of the radii is established.
\end{abstract}

\maketitle

\section{Introduction}

Let $(M,L,\mathcal D)$ be a nonholonomic Lagrangian system, where
$M$ is a $n$-dimensional manifold, $L: TM\to \R$ Lagrangian, and $\mathcal D$ nonintegrable
$(n-k)$-dimensional distribution of constraints, locally defined by 1-forms $\alpha^a$, $a=1,\dots, k$.  Let
$q=(q^1,\dots,q^n)$ be local coordinates on $M$ in which the
constraints are written in the form
\begin{equation}
(\alpha^a,\dot q)=\sum_{i=1}^n \alpha_i^a(q)\dot q^i=0,\qquad
a=1,\dots,k. \label{constraints}\end{equation}

The curves $q(t)$ such that the velocity $\dot{q}$ belongs to the distribution $\mathcal D$ are called \emph{admissible}.
The motion of  the system is described by the Lagrange-d'Alembert principle
\begin{equation}
\Big(\frac{d}{dt}\frac{\partial L}{\partial \dot q}-\frac{\partial
L}{\partial q},\xi\Big)=\sum_i\Big(\frac{d}{dt}\frac{\partial L}{\partial \dot q^i}-\frac{\partial
L}{\partial q^i}\Big)\xi^i=0 \quad \text{for all} \quad\xi\in \mathcal D_q,
\label{L-dAlembert}
\end{equation}
i.e.,
\begin{equation}
\frac{d}{dt}\frac{\partial L}{\partial \dot q^i}=\frac{\partial
L}{\partial q^i}+ \sum_{a=1}^{k}\lambda_a \alpha^a_i,  \qquad
i=1,\dots,n,\label{L-dAlembert2}
\end{equation}
where the Lagrange multipliers $\lambda_a$ are chosen such that the
solutions $q(t)$ are admissible curves.
The sum $\sum_{a=1}^{k}\lambda_a \alpha^a_i$
represents the reaction force of the constraints.

For natural mechanical systems the Lagrangian is
the difference between the kinetic $T=\frac12\langle \dot q,\dot q\rangle=\frac12\sum_{ij} g_{ij} \dot q^i\dot q^j$ and the potential energy $V(q)$,
and we have a well known relation between the variational derivative of the Lagrangian and
the Levi-Civita connection $\nabla$ of the metric $g$:
\begin{equation}\label{veza}
g^{-1}\Big(\frac{d}{dt}\frac{\partial L}{\partial \dot q}-\frac{\partial
L}{\partial q}\Big)=\nabla_{\dot q} \, \dot q+\mathrm{grad}\, V(q),
\end{equation}
where the metric tensor $g$ is considered as a mapping $g\colon T M\to T^* M$.\footnote{Through the paper we use  $\langle \cdot,\cdot\rangle$, $\langle \cdot,\cdot\rangle_0$, etc. in calculating the scalar products defined by the metrics $g$, $g_0$, etc., while $(\cdot,\cdot)$ denotes the Euclidean scalar product in $\R^n$.}
Thus, \eqref{L-dAlembert} can be written in the form
\begin{equation}\label{NJ}
\langle \nabla_{\dot q} \,\dot q+\mathrm{grad}\, V(q),\xi\rangle =0, \qquad \dot q, \xi\in\mathcal D_q.
\end{equation}

Considering \eqref{NJ}, it is natural to define \emph{connection of the vector bundle} $\mathcal D \rightarrow M$:
\begin{equation}\label{nablap}
\nabla^P\colon \Gamma(TM)\times \Gamma(\mathcal D)\longrightarrow \Gamma(\mathcal D), \qquad  \nabla^P_X Y:=P(\nabla_X Y),
\end{equation}
$X\in\Gamma(TM)$, $Y\in\Gamma(\mathcal D)$, where $P$ is the orthogonal projection to $\mathcal D$.
It is a metric connection, in a sense that
\[
Z\langle X,Y\rangle=\langle \nabla^P_Z X,Y\rangle+\langle X,\nabla^P_Z Y\rangle, \qquad Z\in\Gamma(TM), \quad X,Y\in\Gamma(\mathcal D).
\]
Thus, the parallel transport is an isometry along the distribution $\mathcal D$.
Then \eqref{NJ} is equivalent to
\begin{equation}\label{NHJ}
\nabla^P_{\dot q} \,\dot q={\grad}_\mathcal D V(q),\qquad \dot q\in\mathcal D_q,
\end{equation}
where $\grad_\mathcal D V=P(\grad V)$.
If $V\equiv 0$, \eqref{NHJ} takes the form of
the equation of geodesic lines (so called \emph{nonholonomic geodesics})
\begin{equation}\label{GL}
\nabla^P _{\dot q} \,\dot q=0, \qquad \dot q\in\mathcal D_q.
\end{equation}

Hertz was first who observed that the solutions of nonholonomic problems are not solutions of the variational problem: they are not shortest, but straitness lines.
In classical works of Synge \cite{Sy}, Vranceanu \cite{Vr}, Shouten \cite{Sh}, Wagner \cite{Wa1, Wa2} the problem of motion of nonholonomic systems from the geometric point of view is considered.
The analogous variational problem is studied within a framework of sub-Riemannian  geometry or vaconomic mechanics, which will not be considered here.

The paper is organize as follows.  In Section 2 we recall on
the extensions of the vector-bundle connection \eqref{nablap}
to the linear connection on $TM$ considered in \cite{VeFa, BC} and \cite{Lew}, as well as on so called \emph{partial connection} (e.g., see \cite{DrGa}).
Although mentioned objects are very well studied, some natural relationships between them are pointed out.
In the extension given in \cite{VeFa, BC}, the constraints \eqref{constraints} are integrals of the associated geodesic flows, while in
the extension given in \cite{Lew} they are invariant relations (Lemma \ref{lema}).
 In Proposition \ref{pomocna} vector bundle connections of natural mechanical systems having the same dynamics are compared,
 motivating the study of the partial connection and its geometry developed by Shouten and Wagner.
For the completeness of the exposition and in order to fix the notation for the main results, in Section 3 we further
recall basic facts about $G$--Chaplygin systems and the reduction of the connection following Koiller \cite{Koi} and Bak\v sa \cite{Ba} (for Abelian systems).

In Section 4 we consider the Newton type equations on a Riemannian manifold $(M,g)$ and look for a conformal metric $g_*=f^2g$ such that
solutions of the Newton equations, after a time reparametrization, become the geodesic lines of $g_*$ (Proposition \ref{stav}).
The idea is quite simple, but its slightly generalise the method Chaplygin multiplier for Hamiltonization of $G$-Chaplygin systems
\cite{BBM, Co, CCLM, EKR, FeJo, St}. Also, we obtain variants of the Maupertuis principle in nonholonomic mechanics as they are given in \cite{Koi, Ba}.

As an example,  the nonholonomic problem of rolling without
slipping and twisting of an $n$-dimensional balanced ball of radius $\rho$ over a fixed sphere of radius $\sigma$ is considered (Section 5).
It is a $SO(n)$--Chaplygin system that reduces to the tangent bundle of a $(n-1)$-dimensional sphere  (see \cite{Jo6}).
Remarkably, for a special choice of inertia operator,
the reduced system is suitable for both time reparametrisations considered in Section 4. Combining the Chaplygin reducing multiplier (see \cite{Jo6}) and the Maupertuis principle, we transform the reduced system to the zero-energy level set of a natural mechanical system on a sphere endowed with the standard metric and the potential $V_{\epsilon}(x)=-(A^{-1}x,x)^{-\frac{1}{\epsilon}}$, where  $\epsilon={\sigma}/({\sigma\pm\rho})$ and $A=\diag(a_1,\dots,a_n)$
(see Proposition \ref{jakobi}).
In particular, for $\epsilon=+1$, we have Braden's potential \cite{Br}, and for
$\epsilon=-1$ a celebrated Neumann's potential \cite{Moser}.

Thus,
we get the complete integrability when the radius of the ball is twice the radius of the sphere and the ball is a spherical
shall that rolls over a sphere placed inside ($\epsilon=-1$, Theorem \ref{glavna}).
In the three-dimensional case Borisov and Mamaev \cite{BM2} proved integrability of the problem and constructed separating variables using a variant of sphero-conical coordinates (also called  a \emph{nonholonomic deformation of sphero-conical coordinates} or \emph{quasi-sphero-conical coordinates}, see  \cite{BMF, Ts, Ts2}). We obtain, in some sense,
 an explanation for unusual choice of variables by Borisov and Mamaev (see Subsection \ref{razdvajanje}).
Further, the case $\epsilon=+1$ is the limit case, when the radius of the fixed sphere tends to infinity, and we get an alternative proof (Theorem \ref{veselovaRevised}) of the integrability of the Veselova problem studied in \cite{FeJo}.

Finally, apart of the case when $A$ is proportional to the identity matrix and the trajectories of the reduced system are great circles for all $\epsilon$, we also prove integrability in the case where the matrix $A$ has only two distinct parameters
(Theorem \ref{simetricna}).

\section{Connections in noholonomic mechanics}

One of natural questions is a construction of the extension of $\nabla^P$ to an affine connection $\nabla$ on $TM$ such that $\nabla^P=\nabla\vert_{\Gamma(TM)\times\Gamma(\mathcal D)}$.
 Since the distribution $\mathcal{D}$ is nonintegrable, any such extension has a
non-vanishing torsion, for $X,Y\in\Gamma(\mathcal D)$ given by
\begin{equation}\label{torsion}
T(X,Y)=\nabla^P_X\,Y - \nabla^P_Y\, X -[X,Y]=-Q[X,Y].
\end{equation}

Obviously, an extension of the vector bundle
connection $\nabla^P$ is not unique. We recall on extensions that are given in \cite{VeFa, Lew}, as well as on a restriction of $\nabla^P$ to the partial connection $\nabla^\mathcal D$ on $\mathcal D$ (e.g, see \cite{DrGa}).

\subsection{Extensions of the vector bundle connections}

Vershik and Fadeev \cite{VeFa} defined an affine connection on $TM$ as follows.
Let $A^a=g^{-1}(\alpha^a)$, i.e.,
\[
A^{ai}=\sum g^{ij}\alpha^a_j, \qquad a=1,\dots,k,
\]
 and let $Q$ be the orthogonal projection to $\mathcal D^\perp$.
Then the constraints take the form $\alpha^a(\dot q)=\langle A^a,\dot q\rangle=0$, $a=1,\dots, k$,
and the orthogonal projections $Q$ and $P$ read
\begin{align*}
&Q(X)=\sum_{ab} a_{ab}^{-1}A^a\alpha^b(X),\quad P(X)=X-\sum_{ab} a_{ab}^{-1} A^a\alpha^b(X),
\end{align*}
where $a_{ab}^{-1}$ is the matrix inverse to the matrix $a_{ab}=\langle A^a,A^b\rangle$.

The affine connection is given by
\begin{align*}
\nabla^1_X Y &:=\nabla^g_X Y+\sum_{ab}a_{ab}^{-1} A^a(\nabla^g_X\alpha^b)(Y)\\
&=\nabla^g_X Y-\sum_{ab}a_{ab}^{-1} A^a\alpha^b(\nabla^g_X Y)
+\sum_{ab} A^a a_{ab}^{-1}X(\alpha^b(Y))\\
&=\nabla^g_X Y-Q(\nabla^g_X Y)+\sum_{ab} a_{ab}^{-1} X(\alpha^b(Y))A^a
\end{align*}

This connection is further studied by Bloch and Crouch \cite{BC}.

On the other side,  Lewis \cite{Lew} considered another natural affine connection:
\[
\nabla^2_X Y:=\nabla^g_X Y+(\nabla^g_X Q)(Y)=\nabla^g_X Y-Q(\nabla^g_X Y)+\nabla^g_X(Q(Y)).
\]

Note that although the definitions of $\nabla^1$ and $\nabla^2$ are quite similar, they are different:
\[
\nabla^2_X Y=\nabla^1_X Y+\sum_{ab} \alpha^b(Y)\nabla^g _X (a_{ab}^{-1}A^a)
\]

In both cases, they are extensions of $\nabla^P$:
\[
\nabla^1_X Y=\nabla^2_X Y=\nabla^g_X Y-Q(\nabla^g_X Y)=\nabla^P_X Y, \quad X\in\Gamma(TQ), \, Y\in\Gamma(\mathcal D),
\]
and their geodesic lines,
\begin{equation}\label{GL1}
\nabla^i _{\dot q} \,\dot q=0, \qquad i=1,2,
\end{equation}
with an initial condition $\dot q(t_0)\in\mathcal D_{q(t_0)}$ are also nonholonomic geodesic lines, i.e., the solutions of \eqref{GL}.

\begin{lem}\label{lema}
The equation \eqref{GL1} for $\nabla^1$ have a set of first integrals
\[
f^a=\alpha^a(\dot q)=\langle A^a,\dot q\rangle=c^a, \qquad a=1,\dots,k,
\]
which can be interpreted as affine nonholonomic constraints. In general, $f^a$ are not the first integrals of the geodesic flow of the connection $\nabla^2$.
\end{lem}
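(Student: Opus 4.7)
The lemma splits into two independent claims; both reduce to a single direct computation.

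For the first claim, I would rewrite the geodesic equation $\nabla^1_{\dot q}\dot q=0$ using the defining formula for $\nabla^1$, solving for the Levi-Civita piece:
\[
\nabla^g_{\dot q}\dot q=-\sum_{bc}a_{bc}^{-1}A^b(\nabla^g_{\dot q}\alpha^c)(\dot q).
\]
Then differentiate $f^a=\alpha^a(\dot q)$ by the Leibniz rule for covariant derivatives of a one-form on a vector field, obtaining $\dot f^a=(\nabla^g_{\dot q}\alpha^a)(\dot q)+\alpha^a(\nabla^g_{\dot q}\dot q)$. Substituting the geodesic equation and applying the two key identities $\alpha^a(A^b)=\langle A^a,A^b\rangle=a_{ab}$ and $\sum_b a_{ab}a_{bc}^{-1}=\delta^c_a$ produces an exact telescoping cancellation between the two summands, so $\dot f^a\equiv 0$. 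This is the calculation that makes the Vershik--Faddeev connection $\nabla^1$ natural from the constraint point of view.

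For the second claim I would exploit the difference formula
\[
\nabla^2_XY-\nabla^1_XY=\sum_{ab}\alpha^b(Y)\,\nabla^g_X(a_{ab}^{-1}A^a)
\]
recorded earlier in the excerpt. Along a $\nabla^2$-geodesic this gives $\nabla^1_{\dot q}\dot q=-\sum_{ab}f^b\,\nabla^g_{\dot q}(a_{ab}^{-1}A^a)$. Repeating the derivation of $\dot f^a$, the Part 1 cancellation still absorbs the Levi-Civita contribution, leaving
\[
\dot f^a=-\sum_{bc}f^c\,\alpha^a\bigl(\nabla^g_{\dot q}(a_{bc}^{-1}A^b)\bigr).
\]
The right-hand side vanishes when all $f^c=0$, consistent with the previously noted fact that $\nabla^2$-geodesics starting in $\mathcal D$ stay in $\mathcal D$; but for $\dot q\notin\mathcal D_q$ it is generically non-zero, which is exactly the negation of the first-integral property.

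The main obstacle is making the ``in general'' part precise, since $\dot f^a=0$ is a priori allowed on some special structures $(g,\mathcal D)$. I would handle this by exhibiting a minimal concrete example, e.g.\ a single non-parallel constraint one-form $\alpha^1$ on a flat background for which $\nabla^g A^1\neq 0$ forces the scalar $\alpha^1(\nabla^g_{\dot q}(a_{11}^{-1}A^1))$ to be non-zero on an open set of $TM$; equivalently, one can observe that uniform vanishing of the matrix $\alpha^a(\nabla^g_{\dot q}(a_{bc}^{-1}A^b))$ would impose a parallelism-type condition on $\mathcal D^\perp$ relative to $\nabla^g$, which fails in the genuinely nonholonomic setting of interest. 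Apart from this step, the argument is a routine index manipulation.
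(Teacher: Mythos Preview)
The paper states this lemma without proof, so there is nothing to compare against; your argument must stand on its own, and it does. Your Part~1 computation is correct: the Leibniz identity plus $\alpha^a(A^b)=a_{ab}$ forces the exact cancellation $\dot f^a=0$. A marginally quicker route uses the third displayed form of $\nabla^1$ in the paper,
\[
\nabla^1_{\dot q}\dot q=P(\nabla^g_{\dot q}\dot q)+\sum_{ab}a_{ab}^{-1}\,\dot f^b\,A^a,
\]
where the two summands lie in $\mathcal D$ and $\mathcal D^\perp$ respectively; vanishing of the second summand immediately gives $\dot f^c=0$ after pairing with $\alpha^c$. This is the same computation you perform, just with the orthogonal decomposition made explicit up front.

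Your Part~2 is also sound: subtracting the difference formula and projecting onto $\mathcal D^\perp$ yields precisely
\[
\dot f^a=-\sum_{bc}f^c\,\alpha^a\bigl(\nabla^g_{\dot q}(a_{bc}^{-1}A^b)\bigr),
\]
and your observation that this vanishes on $\{f^c=0\}$ but not identically is exactly the content of the ``in general'' clause. Your proposed single-constraint flat example is adequate to certify non-triviality; the only caution is to choose $\alpha^1$ so that $\alpha^1(\nabla^g_{\dot q}(a_{11}^{-1}A^1))$ does not accidentally vanish (e.g.\ normalising $\alpha^1$ to have unit length kills the $a_{11}^{-1}$ variation, so one must keep a non-constant $\vert A^1\vert$ or pick an $\alpha^1$ whose Levi-Civita covariant derivative has a component along $A^1$). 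With that minor care, the argument is complete.
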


We note that in \cite{VeVe2, FeJo} the extended systems on the whole tangent spaces of the configuration spaces
of LR systems are considered (see \cite{Jo5, Jo6} for modified LR systems). They are the geodesic flows of the corresponding extended connections $\nabla^1$ with first integrals defining affine constraints.

\subsection{Partial connections}
Consider two natural mechanical nonholonomic systems  $(M,g,V,\mathcal D)$ and $(M,\tilde g,V,\mathcal D)$, such that
metrics $g$ and $\tilde g$ define the same orthogonal projections $P=\tilde P$ and that the restrictions of metrics
$g\vert_\mathcal D$ and $\tilde g\vert_\mathcal D$ coincides. Let $\nabla$ and $\tilde\nabla$ be the corresponding
Levi-Civita connections.

\begin{prop}\label{pomocna}
The nonholonomic systems $(M,g,V,\mathcal D)$ and $(M,\tilde g,V,\mathcal D)$ have the same solutions. However, their
vector bundle connections $\nabla^P$ and $\tilde{\nabla}^P$, in general, are different.
\end{prop}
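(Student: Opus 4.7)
The plan is to establish both assertions through the intrinsic equation of motion \eqref{NHJ}, $\nabla^P_{\dot q}\dot q = \grad_\mathcal D V$, combined with a careful examination of the Koszul formula. The right-hand side is handled immediately: $\grad_\mathcal D V$ is the unique element of $\mathcal D$ satisfying $\langle \grad_\mathcal D V,\xi\rangle = dV(\xi)$ for all $\xi \in \mathcal D$, and since $g\vert_\mathcal D = \tilde g\vert_\mathcal D$ and the projection $P$ is by hypothesis common to both metrics, the two systems necessarily produce the same vector.

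For the left-hand side, I would apply the Koszul formula to $\langle \nabla^g_X Y, Z\rangle_g$ with $X, Y, Z \in \Gamma(\mathcal D)$. Noting that $\mathcal D^{\perp_g} = \mathcal D^{\perp_{\tilde g}}$ (so the orthogonal complements and hence the projections $P, Q$ are also shared), each of the six terms falls into one of two classes: either a pure inner product between two $\mathcal D$-vectors, or a pairing of a Lie bracket $[U,V]$ with a $\mathcal D$-vector, in which case only the $P$-component $P[U,V]$ contributes and the inner product again reduces to $g\vert_\mathcal D$. Hence every term is shared by $g$ and $\tilde g$, forcing $P(\nabla^g_X Y) = P(\tilde\nabla^{\tilde g}_X Y)$, i.e., $\nabla^P = \tilde\nabla^P$ on $\Gamma(\mathcal D)\times\Gamma(\mathcal D)$. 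Specialising to $X = Y = \dot q \in \mathcal D$ yields the equality of the left-hand sides, and the two systems therefore share the same equation \eqref{NHJ} and the same solutions.

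For the second assertion I would rerun the same Koszul computation, but now with $X \in \Gamma(\mathcal D^\perp)$ and $Y, Z \in \Gamma(\mathcal D)$. Exactly one term, $\langle [Y,Z], X\rangle = \langle Q[Y,Z], X\rangle$, now pairs two vectors lying in $\mathcal D^\perp$ and so genuinely depends on $g\vert_{\mathcal D^\perp}$. Because $\mathcal D$ is nonintegrable, there exist $Y, Z \in \Gamma(\mathcal D)$ with $Q[Y,Z]\ne 0$; choosing $g$ and $\tilde g$ that coincide on $\mathcal D$ but differ on $\mathcal D^\perp\times\mathcal D^\perp$ then makes this term differ, which propagates (via the $g\vert_\mathcal D$-pairing with $Z \in \mathcal D$) to a genuine disagreement $\nabla^P_X Y \ne \tilde\nabla^P_X Y$. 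An explicit witness is $M = \R^3$ with $\mathcal D = \ker(dq^3 - q^1\,dq^2)$, taking $g$ Euclidean and $\tilde g$ equal to $g$ on $\mathcal D$ but rescaled along the single $\mathcal D^\perp$-direction.

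The only real obstacle is bookkeeping in the Koszul formula: one must recognise that when the third slot lies in $\mathcal D$ only the $P$-components of brackets survive, so the full metric is needed only on $\mathcal D$; whereas the single term $\langle [Y,Z], X\rangle$ with $X \in \mathcal D^\perp$ isolates exactly the $\mathcal D^\perp\times\mathcal D^\perp$ portion of the metric that the hypothesis leaves unconstrained. This simultaneously explains why the dynamics coincide (the partial connection $\nabla^\mathcal D$ on $\mathcal D\times\mathcal D$ and $\grad_\mathcal D V$ are metric-invariants of $g\vert_\mathcal D$) and why the full vector-bundle connection $\nabla^P$ is not, thereby motivating the attention given in the paper to the intrinsic partial connection.
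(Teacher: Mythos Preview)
Your proposal is correct and follows essentially the same route as the paper: both arguments run the Koszul formula with $Y,Z\in\Gamma(\mathcal D)$, observe that every term except $g([Y,Z],X)$ depends only on $g\vert_\mathcal D$ and the shared projection $P$, and then isolate the $Q$-component $g(Q[Y,Z],QX)$ as the sole obstruction. The paper carries this out for arbitrary $X\in\Gamma(TM)$ and arrives at the explicit difference formula $g(\nabla_X Y-\tilde\nabla_X Y,Z)=\tfrac12\big(\tilde g(QX,Q[Y,Z])-g(QX,Q[Y,Z])\big)$, whereas you specialise to $X\in\Gamma(\mathcal D^\perp)$ and supply a concrete contact-distribution example; both choices are adequate and the underlying computation is the same.
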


\begin{proof}
For the Levi-Civita connection $\nabla$ we have the identity
\begin{align*}
g(\nabla_X\,Y,Z)=& \frac12\big(X g(Y,Z)+ Y g(X,Z)-Z g(X,Y) \big)\\
& +\frac12\big(g([X,Y],Z)- g([X,Z],Y)-g([Y,Z],X) \big),
\end{align*}
and the same expression for $\tilde g$ and $\tilde\nabla$. Therefore, if $X,Y\in\Gamma(\mathcal D)$, from $g_\mathcal D=\tilde g_\mathcal D$, $P=\tilde P$, we get
\[
g(\nabla_X\,Y,Z)=g(\tilde\nabla_X\, Y,Z) \qquad  \text{for all} \qquad Z\in\Gamma(\mathcal D),
\]
that is $P(\nabla_X\,Y)=P(\tilde\nabla_X\,Y)$, $X,Y\in\Gamma(\mathcal D)$.
Therefore they have the same nonholonomic equations \eqref{NHJ}.
On the other hand, for $X$ arbitrary and $Y,Z\in\Gamma(\mathcal D)$, we obtain
\[
g(\nabla_X\,Y-\tilde\nabla_X\, Y,Z)=g(\nabla_X\,Y,Z)-\tilde{g}(\tilde\nabla_X\, Y,Z)=\frac12\big(\tilde g(QX,Q[Y,Z])-g(QX,Q[Y,Z])\big).
\]
Thus, if the distribution $\mathcal D$ is integrable, the vector bundle connections $\nabla^P$ and $\tilde{\nabla}^P$ coincide, but in general they are different.
\end{proof}

Thus, the parallel transports of two vector bundle connections generally coincide only along the admissible paths. Therefore, the vector bundle connection $\nabla^P$ and the
corresponding extensions are not intrinsic objects related to the nonholonomic mechanical problems. On the other hand, from the proof of Proposition
\ref{pomocna}, we see that they have the same \emph{partial connection} (or \emph{nonholonomic connection}), defined as a restriction of $\nabla^P$ to the vector fields that are
sections of $\mathcal D$:
\[
\nabla^\mathcal{D}\colon \Gamma(\mathcal D)\times \Gamma(\mathcal D)\longrightarrow \Gamma(\mathcal D), \qquad \nabla^\mathcal{D}:=\nabla^P\vert_{\Gamma(\mathcal D)\times \Gamma(\mathcal D)}.
\]

Furthermore, the converse statements is also valid (see \cite{Go, DrGa}):
for a given a projection $P: TM\to \mathcal D$ and the Riemannian metric $g_\mathcal D$ on the distribution $\mathcal D$ there exist unique partial connection
$\nabla^\mathcal{D}\colon \Gamma(\mathcal D)\times \Gamma(\mathcal D)\longrightarrow \Gamma(\mathcal D)$,
that is a metric connection
\begin{equation}\label{a1}
Z\langle X,Y\rangle_\mathcal D=\langle \nabla^\mathcal{D}_Z X,Y\rangle_\mathcal D+\langle X,\nabla^\mathcal{D}_Z Y\rangle_\mathcal D, \qquad X,Y,Z\in\Gamma(\mathcal D)
\end{equation}
with vanishing torsion ${\bf T}$ of $\nabla^\mathcal{D}$ defined  by
\begin{equation}\label{a2}
{\bf T}(X,Y)=\nabla^\mathcal{D}_X\,Y - \nabla^\mathcal{D}_Y\, X -P[X,Y]=0, \qquad X,Y\in\Gamma(\mathcal D).
\end{equation}

Moreover, the projection of the gradient $\grad_\mathcal D V=P(\grad V)\in \Gamma(\mathcal D)$ can be defined in terms of the restriction of the metric to $\mathcal D$ as well:
\[
\langle \grad_\mathcal D V,X\rangle_\mathcal D=(dV,X) \qquad \text{for all} \qquad X\in\Gamma(\mathcal D).
\]

Therefore, the nonholomic equations \eqref{NHJ} are uniquely defined by the Riemannian metric $g_\mathcal D$ on $\mathcal D$,
the projection $P\colon TM\to \mathcal D$ (i.e, the transverse distribution $\mathcal D^\perp$), and the potential function $V$.
The classification of the $G$--invariant structures $(G,\mathcal D,g_\mathcal D, P)$  on 3--dimensional Lie groups $G$ is given in \cite{BBR}.

It seems that Vrancheanu (see \cite{Vr}) was the first who introduced a notion of partial connection.
In the same time, Synge \cite{Sy} considered connection on the vector bundle and its extension. The first notion of a curvature of the partial connection is given by Shouten \cite{Sh}, which is improved by Wagner \cite{DrGa, Go, Wa1}.
Note if the curvature of the vector bundle connection \eqref{nablap} is zero then we have $n-k$ independent vector fields parallel along arbitrary curves, whence, in particular, parallel transport \emph{along admissible paths} does not
depend of the path.
However, the converse is not true in general.
Wagner constructed an extension of the partial connection $\nabla^\mathcal D$ to a connection on the vector bundle $\mathcal D\to M$
such that its curvature is zero if and only if the parallel transport {along admissible paths} does not
depend of the path.

\section{Chaplygin reduction and connections}

\subsection{Chaplygin reduction}
Suppose that $\pi: M\to N=M/{G}$ is a principal bundle
with respect to the
{\it left\/} action of a Lie group ${G}$, and $\mathcal D$ is a principal
connection, i.e., $\mathcal D$ is a $G$-invariant distribution (collection of \emph{horizontal spaces}) and
$T_q M=\mathcal D_q\oplus \mathcal V_q$ for all $q$,
where $\mathcal V_q$ is tangent to the $G$--orbit through $q$ (the \emph{vertical space} at $q$).
Given a vector $X_q\in T_q M$, there is a decomposition $X_q=X_q^h+X_q^v$, where
$X_q^h\in \mathcal D_q$, $X_q^v\in \mathcal V_q$.
The {\it curvature} of $\mathcal D$ is the vertical valued 2-form $K$ on $M$
defined by
$$
K(X_q,Y_q)=-[\bar X_q^h,\bar Y_q^h]_q^v ,
$$
where $\bar X$ and $\bar Y$ are smooth vector fields on $M$ obtained by
extending of $X_q$ and $Y_q$.

In addition, suppose that
$G$ acts by isometries on $(M,g)$ and
that $V$ is $G$--invariant.
Then the equations \eqref{L-dAlembert} are $G$-invariant and the restriction $L\vert_\mathcal D$
induces the reduced Lagrangian $L_{red}$, i.e, the reduced metric $g_0$ and the reduced potential energy $V_0$, via identification
$TN\approx \mathcal D/{G}$.
The  {\it reduced Lagrange--d'Alambert} equations on the tangent bundle $TN$ take the form
\begin{equation}
\big(\frac{\partial L_{red}}{\partial x}-\frac{d}{dt}\frac{\partial L_{red}}{\partial
\dot x},\eta\big)= \langle\dot x^h,K_q(\dot x^h,\eta^h)\rangle\vert_q \quad
\mathrm{for\;all} \quad \eta\in T_x N,
\label{ChaplyginRed}
\end{equation}
where $q\in\pi^{-1}(x)$ and $\dot x^h$ and $\eta^h$ are unique horizontal lifts of $\dot x$ and $\eta$ at $q$. The right-hand side of \eqref{ChaplyginRed} can be written as $\Sigma(\dot x,\dot x,\eta)$ where
$\Sigma$ is $(0,3)$--tensor field on the base manifold $N$ defined by
\begin{equation}\label{polje}
\Sigma_x(X,Y,Z)=\langle X^h,K_q(Y^h,Z^h)\rangle\vert_q, \qquad q\in\pi^{-1}(x).
\end{equation}

The system $(M,g,V,\mathcal D,G)$ is referred to as {\it a $G$--Chaplygin system} \cite{Koi}, as a generalization of classical Chaplygin systems
with Abelian symmetries \cite{Ch2}.

\subsection{Reduced connections}
Let $\nabla^0$ be
the Levi-Civita connection of the reduced metric $g_0$.
Then, by the use of \eqref{veza}, the reduced equation \eqref{ChaplyginRed} can be written in the form
\[
\langle\nabla^0_{\dot x} \, \dot x+\grad_0 V_0(x),\eta\rangle_0+\Sigma(\dot x,\dot x,\eta)=0,
\]
that is,
\begin{equation}\label{CaplJedn}
\nabla^0_{\dot x}\,\dot{x}+B(\dot x,\dot x)=-\grad_0 V_0(x),
\end{equation}
where the gradient is taken with respect to $g_0$ and
$(1,2)$--tensor field $B$ is defined by
\[
\langle B(X,Y),Z\rangle_0=\Sigma(X,Y,Z)
\]
("raising up of the third index" in $\Sigma$).
Now, the equations \eqref{CaplJedn} can be written as
\[
\nabla^{B}_{\dot x}\,\dot x=-\grad_0 V_0,
\]
where $\nabla^B$ is a symmetric connection defined by
\[
\nabla^B_X\,Y=\nabla^0_X\,Y+\frac12\big(B(X,Y)+B(Y,X)\big).
\]
It is clear that $\nabla^B$ is metric, i.e, Levi-Civita connection, if and only if $B$ is skew--symmetric.
To the authors knowledge, the connection $\nabla^B$, for Abelian Chaplygin systems, is firstly introduced by Bak\v sa in \cite{Ba}.

On the other hand, in the case of $G$--Chaplygin systems,  $\nabla^P$ is $G$--invariant and descends to the reduced connection
$\tilde\nabla_X\, Y=\pi_*(\nabla^P_{X^h}\,Y^h)$ (see \cite{Koi, Co}),
and the solutions of \eqref{NHJ},
project onto the solution of
$$
\tilde\nabla_{\dot x}\,\dot x=-\grad_0 V_0.
$$

The connection $\tilde\nabla$ is given by the following expression (see Koiller \cite{Koi} and Cortes \cite{Co}):
\[
\tilde\nabla_X\, Y=\nabla^0_X\,Y+\frac12\big(B(X,Y)+B(Y,X)-C(X,Y)\big),
\]
where $C$ is a $(1,2)$--tensor field
\[
\langle X,C(Y,Z)\rangle_0=\Sigma(X,Y,Z)
\]
("raising up of the first index" in $\Sigma$).
From the skew-symmetry of the curvature, we have that $C$ is skew-symmetric as well: $C(X,Y)=-C(Y,X)$.

Note that, since $\nabla^P$ is metric, $\tilde\nabla$ is a metric connection as well.
It is symmetric (and whence Levi-Civita) if and only if $C\equiv 0$, which is the same as $B\equiv 0$, i.e., $\nabla^0=\tilde\nabla=\nabla^B$.

The torsion of $\tilde\nabla$ is projection of the torsion \eqref{torsion}:
\begin{equation*}
\tilde T(X,Y)=\pi_* T(X^h,Y^h)=-\pi_*Q[X^h,Y^h],
\end{equation*}
which equals to zero if $Q[X^h,Y^h]=0$ or $\mathcal D^\perp=\mathcal V$ (here $X^h$ and $Y^h$ are horizontal lifts of $X$ and $Y$). The first condition is equivalent to the integrablity of the distribution, while the second condition is equivalent to the vanishing of reaction forces.

\section{Time reparametrisation and conformal metrics}\label{conformal}

\subsection{Variation of the Chaplygin multiplier method}
The nonholonomic equations are not Hamiltonian. For the reduced Abelian Chaplygin systems \eqref{ChaplyginRed}, Chaplygin proposed
the Hamiltanization method using a time reparametrization $d\tau=\nu(x)dt$
now referred as a Chaplygin multiplier (see \cite{Ch2}). Geometrically, this means that the reduced system is conformally Hamiltonian.
Various equivalent ways of Hamiltonization of $G$--Chaplygin systems
and the relationship with an existence of an invariant measure can be found in \cite{Co, CCLM, EKR, FeJo, St}.
Let us point out that the existence of an invariant measure (e.g., see \cite{BiBoMa, BoKaPi, BoKu, FNM, RaPr, ZB}), whence a possible Hamiltonization, is not typical for nonholonomic systems.

In terms of connections, the Chaplygin multiplier
is a function $\nu(x)\ne 0$ such that the reduced equation \eqref{CaplJedn} in the new time takes the form
\begin{equation}\label{CCE}
\nabla^*_{x'}\,x'=-\grad_{*} V_0,
\end{equation}
where $\nabla^{*}$ is the Levi-Civita connection of the conformal metric  $g_{*}=\nu^2 g_0$ on the base manifold $N$. The Chaplygin time reparametrization in a framework of connections is also studied in \cite{FB}.

We will slightly modify the Chaplygin method by allowing the conformal factor and multiplier $\nu$
to be independent. Let us consider a general setting, where conformal metrics $g_*=f^2 g$ and $g$ are given on a manifold $M$ with local coordinates $q=(q^1,\dots,q^n)$ ($f\ne 0$ on $M$).
The coefficients of their Levi-Civita connections $\nabla^*$ and $\nabla$ are related
by
\begin{equation}\label{g*}
\Gamma^{*k}_{ij}=\Gamma^k_{ij}+\frac{1}{f}\big(\delta^{k}_j\frac{\partial f}{\partial q^i}+\delta^k_i \frac{\partial f}{\partial q^j}-g_{ij}g^{kl}\frac{\partial f}{\partial q^l}  \big).
\end{equation}

Consider the geodesic equations  on $(M,g_*)$,
\begin{equation}\label{gprime}
\frac{ d^2 q^k}{d\tau^2}+\Gamma^{*k}_{ij}\frac{dq^i}{d\tau}\frac{dq^j}{d\tau}=0,
\end{equation}
with respect to an affine parameter $\tau$.
The left hand side of  \eqref{gprime},
after a time-reparametrisation
\begin{equation}\label{NewTime}
d\tau=\nu(q)dt\colon \quad \dot q=\nu\cdot q',
\end{equation}
gets a well known expression
\begin{equation}\label{gdot}
\frac{ d^2 q^k}{d\tau^2}+\Gamma^{*k}_{ij}\frac{dq^i}{d\tau}\frac{dq^j}{d\tau}=\frac{1}{\nu^2}\big(\ddot q^k-\frac{\dot{\nu}}{\nu}\dot q^k+\Gamma^{*k}_{ij}\dot q^i\dot q^j\big).
\end{equation}

Now, combining  \eqref{g*} and \eqref{gdot}, we obtain the identity
\begin{equation}\label{gdot*}
\frac{ d^2 q^k}{d\tau^2}+\Gamma^{*k}_{ij}\frac{dq^i}{d\tau}\frac{dq^j}{d\tau}=\frac{1}{\nu^2}\big(\ddot q^k+\Gamma^{k}_{ij}\dot q^i\dot q^j-\frac{\dot{\nu}}{\nu}\dot q^k+\frac{1}{f}\big(\delta^{k}_j\frac{\partial f}{\partial q^i}+\delta^k_i \frac{\partial f}{\partial q^j}-g_{ij}g^{kl}\frac{\partial f}{\partial q^l}  \big)\dot q^i\dot q^j\big).
\end{equation}

Thus, the geodesic equations \eqref{gprime} in the time $t$ take the form
\begin{equation}\label{fN}
\ddot q^k+\Gamma^{k}_{ij}\dot q^i\dot q^j=\frac{\partial \ln\nu}{\partial q^r}\dot q^r\dot q^k-\frac{1}{f}\big(2\frac{\partial f}{\partial q^i}\dot q^i\dot q^k-g_{ij}g^{kl}\frac{\partial f}{\partial q^l}\dot q^i\dot q^j  \big).
\end{equation}

Let us define a $\dot q$--dependent vector field on $M$:
\begin{equation}\label{dobarF}
F=\langle {\grad}\ln \nu,\dot q\rangle\dot q-2\langle {\grad}\ln f,\dot q\rangle\dot q+\langle \dot q,\dot q\rangle {\grad} \ln f.
\end{equation}
Then we can write the identity \eqref{gdot*} and system \eqref{fN} in an invariant form
\begin{equation}\label{invarijantnaForma}
\nabla^*_{q'}\, q'=\nu^{-2}\big(\nabla_{\dot q}\,\dot q-F(\dot q,q)\big)
\end{equation}
and $\nabla_{\dot q}\,{\dot q}=F(\dot q,q)$, respectively.

\begin{prop}\label{stav}
Consider the Newton type equation
\begin{equation}\label{Newton}
\nabla_{\dot q}\,{\dot q}=F(\dot q,q)
\end{equation}
on a Riemannian manifold $(M,g)$,
such that the force field can be written in the form \eqref{dobarF}
for certain functions $f,\nu$ different from zero on $M$.
Then, after a time reparametrisation \eqref{NewTime}, the equation takes the form of the geodesics equation
\begin{equation}\label{Geq}
\nabla^*_{q'}\,q'=0
\end{equation}
of the metric $g_*=f^2 g$.
\end{prop}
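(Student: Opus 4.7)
The plan is to exploit the identity \eqref{invarijantnaForma}, which is already the technical content of the proposition. Granted this identity, the proof is one line: if $\nabla_{\dot q}\,\dot q = F(\dot q,q)$, then the right-hand side of \eqref{invarijantnaForma} vanishes identically, so $\nabla^*_{q'}\,q'=0$, which is exactly the geodesic equation for $g_*$ in the affine parameter $\tau$. Since $\nu\ne 0$ and $f\ne 0$ on $M$, both the time change \eqref{NewTime} and the conformal factor $f^2$ are globally well defined, so no regularity obstruction arises.

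To justify \eqref{invarijantnaForma} rigorously, I would proceed in three computational steps, retracing the derivation spelled out before the statement. First, I would write the geodesic equation \eqref{gprime} of $g_*$ in local coordinates and invoke the classical conformal-change formula \eqref{g*} for the Christoffel symbols of $g_*$ in terms of those of $g$. Second, I would apply the chain rule induced by $d\tau=\nu(q)\,dt$ to convert $\tau$-derivatives into $t$-derivatives; this produces the uniform factor $\nu^{-2}$ together with the drag term $-(\dot\nu/\nu)\dot q^k$ displayed in \eqref{gdot}. Third, I would gather all correction terms coming from \eqref{g*} and from the drag, and recognize their sum as $-F(\dot q,q)/\nu^2$ for the vector field $F$ defined in \eqref{dobarF}.

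The main obstacle is purely bookkeeping: one must verify that the three summands of \eqref{dobarF}, one coming from the multiplier $\nu$ and two coming from the conformal factor $f$, reproduce term by term the Christoffel correction and the drag. In particular, the contribution $\langle \dot q,\dot q\rangle\,\grad \ln f$ matches the $g_{ij}g^{kl}\partial f/\partial q^l$ contraction, while $-2\langle \grad\ln f,\dot q\rangle\,\dot q$ collects the two $\delta^k_i\,\partial_j f + \delta^k_j\,\partial_i f$ terms after pairing with $\dot q^i\dot q^j$. Finally, the drag $-(\dot\nu/\nu)\dot q^k$ turns into $\langle\grad\ln\nu,\dot q\rangle\,\dot q$ after writing $\dot\nu=\langle\grad\nu,\dot q\rangle$. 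Since equations \eqref{gdot*}--\eqref{invarijantnaForma} in the text already carry out this matching explicitly, the proof of Proposition \ref{stav} reduces to a direct reading of \eqref{invarijantnaForma}.
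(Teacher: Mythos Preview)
Your proposal is correct and follows exactly the paper's approach: the paper establishes the identity \eqref{invarijantnaForma} through the explicit coordinate computation \eqref{g*}--\eqref{fN} and then states Proposition~\ref{stav} as its immediate consequence, without a separate proof block. Your three-step outline reproduces precisely this derivation, so nothing is missing.
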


Assuming $\nu=f^\alpha$, the expression \eqref{dobarF} is slightly simplified:
\begin{equation}\label{force}
F=(\alpha-2)\langle {\grad}\ln f,\dot q\rangle\dot q+\langle \dot q,\dot q\rangle {\grad}\ln f.
\end{equation}

Taking $\alpha=1$ and $(M,g)$ to be the reduced space $(N,g_0)$ of the $G$-Chaplygin system,
the force field \eqref{force} reads
\[
F=-\langle {\grad_0}\ln\nu,\dot x\rangle_0\dot x+\langle \dot x,\dot x\rangle_0 {\grad_0}\ln\nu,
\]
which is exactly the term $-B(\dot x,\dot x)$ in \eqref{CaplJedn} when the method of the Chaplygin multiplier is applicable (e.g., see \cite{FeJo}). This is also the case of generalized Chaplygin method  (see \cite{BBM}). It remains to note that Proposition \ref{stav} is without a potential force field.
But for $\nu=f$,
the gradients $\grad_0 V_0$ and $\grad_* V_0$ of the metrics $g_0$ and $g_*=\nu ^{2} g_0$ are related by
$\grad_* V_0=\nu^{-2}\grad_0 V_0$.
Thus, from \eqref{invarijantnaForma}, we have the identity
\[
\nabla^*_{x'}\,x'+\grad_*V_0=\nu^{-2}\big(\nabla^0_{\dot x}\,\dot x+B(\dot x,\dot x)+\grad_0 V_0(x)\big)
\]
and \eqref{CaplJedn} takes the form \eqref{CCE} (the Chaplygin multiplier does not depend on the potential).

Note that the geodesic equation \eqref{Geq} has the kinetic energy integral $\frac12\langle q',q'\rangle_*$. Therefore,  the system
\eqref{Newton},  \eqref{dobarF} has the quadratic first integral
${f^2}/{2\nu^2} \langle\dot q,\dot q\rangle$, which is an obstruction to the construction.
In particular, when the force $F(\dot q,q)$ is gyroscopic, that is $\langle F(\dot q,q),\dot q\rangle=0$, the system preserves the kinetic
energy $\frac12\langle\dot q,\dot q\rangle$ as is the case of $G$--Chaplygin systems. Whence,
 $f^2/\nu^2$ is a constant and we essentially have the Chaplygin construction with $f=\nu$.

However, Proposition \ref{stav} can be formulated also with a weaker assumption:
for the Newton equation \eqref{Newton}
having an invariant relation
\begin{equation}\label{invariantSurface}
\mathcal E=\Big\{(\dot q,q)\in TM\,\vert\, \frac12 \langle\dot q,\dot q\rangle-\frac{\nu^2}{f^2}=0\Big\},
\end{equation}
when the force $F$ restricted to $\mathcal E$ reads
\begin{align}\label{force2}
F=&\langle {\grad}\ln \nu,\dot q\rangle\dot q-2\langle {\grad}\ln f,\dot q\rangle\dot q+
2\frac{\nu^2}{f^2} \grad \ln f  \\
\nonumber &(=(\alpha-2)\langle {\grad}\ln f,\dot q\rangle\dot q+f^{2\alpha-4}\grad f^2, \quad \text{for}\quad \nu=f^\alpha).
\end{align}
Then the solution of \eqref{Newton}, \eqref{force2} that belong to the invariant surface \eqref{invariantSurface} are mapped
to the geodesic lines \eqref{Geq} with the unit kinetic energy $\frac12\langle q',q'\rangle_*=1$.

\begin{remark}\label{AA}
{\rm
The transformations between natural mechanical systems, that include time rescaling, are already studied by Painlev\'e and  Levi-Civita.
In a general form the transformations for time-independent and time-dependent systems are
described by Thomas \cite{Th} and Lichnerowitz and Aufenkamp \cite{LA}, respectively. For a recent results, see \cite{Al}.
The main difference
with respect to Chaplygin's method as presented here, is that the force fields considered in \cite{Al, LA, Th} do not depend on velocities.
}
\end{remark}

\subsection{The nonholonomic Maupertuis principle}
In the case $\alpha=2$, \eqref{force2} reads $F=\grad f^2$.
Taking $f=\sqrt{h-V(q)}$, the invariant relation \eqref{invariantSurface} and the force field become
\begin{equation}\label{isoenergetic}
\mathcal E_h: \quad \frac12 \langle\dot q,\dot q\rangle+V(q)=h
\end{equation}
and
$F=-\grad V$, respectively.
In other words,
we obtain a well known formulation of the \emph{Maupertuis principle}: the solutions $q(t)$ of
the Newton equation
$\nabla_{\dot q}\,\dot q=-\grad{V}$
with the fixed energy \eqref{isoenergetic}, in the new time
\begin{equation}\label{NT}
d\tau=(h-V)dt,
\end{equation}
are the geodesic lines $q(\tau)$
of the \emph{Jacobi metric}
$g_J=(h-V)g$
with the unit kinetic energy $\frac12 \langle q',q'\rangle_J=1$.

Further, for the curves $q(t)$ within the region $V(q)<h$ that belong to the isoenergetic surface \eqref{isoenergetic}, the identity
\eqref{invarijantnaForma} reads
\begin{equation}\label{MJ}
\nabla^J_{q'}\,q'=(h-V)^{-2}\big(\nabla_{\dot q}\,\dot q+\grad V\big),
\end{equation}
where $\nabla^J$ is the Levi-Civita connection of the {Jacobi metric}. Substituting \eqref{MJ} to the nonholonomic equation \eqref{NJ}, we obtain
\begin{equation*}\label{NJnova}
\langle \nabla_{\dot q} \,\dot q+\mathrm{grad}\, V(q),\xi\rangle =(h-V)^2\langle \nabla^J_{q'}\,q',\xi \rangle=(h-V)\langle \nabla^J_{q'}\,q',\xi \rangle_J=0,
\end{equation*}
for all $\xi\in\mathcal D_q$.
Whence, we get
the \emph{nonholonomic Maupertuis principle}: after changing of time \eqref{NT}, the solutions $q(t)$ of \eqref{NHJ}
that belong to the isoenergetic surface \eqref{isoenergetic} become the nonholomic geodesic lines
\begin{equation}\label{GLMJ}
\nabla^{JP} _{q'} \,q'=0, \qquad q'\in\mathcal D_q,
\end{equation}
with the unit kinetic energy $\frac12 \langle q',q'\rangle_J=1$,
where
\[
\nabla^{JP}_X Y=P(\nabla^J_X Y), \quad X\in\Gamma(TM), \quad Y\in\Gamma(\mathcal D).
\]

Note, since $g$ and $g_J$ are conformal, the orthogonal projection $P$ is the same for both metrics.
The above version of the Maupertuis principle can be found in Koiller \cite{Koi} (see also Synge \cite{Sy}).

For $G$--Chaplygin systems, as above,
let $\nabla^0$ and $\nabla^{J}$ be
the Levi-Civita connections of the reduced metric $g_0$ and of the Jacobi metric $g_{J}=(h-V_0)g_0$, and consider
a symmetric connection
\begin{equation}\label{baksa}
\nabla^{JB} _{X} \,Y=\nabla^{J}_X\, Y+\frac12\big(B(X,Y)+B(Y,X)\big)
\end{equation}
on the base manifold $N=M/G$. After a time rescaling $d\tau=(h-V_0)dt$, we have
the identity
\begin{equation}\label{MJCJ}
\nabla^{J}_{x'}\,x'+B(x',x')=(h-V_0)^{-2}\big(\nabla^0_{\dot x}\,\dot x+B(\dot x,\dot x)+\grad_0 V_0\big),
\end{equation}
which implies the following variant of the Maupertuis principle derived by Bak\v sa for Abelian systems \cite{Ba}:
the solution of the reduced equation \eqref{CaplJedn}
that satisfy
$\frac12 \langle \dot x,\dot x\rangle_0+V_0(q)=h$
in the new time $\tau$ are the geodesic lines
\[
\nabla^{JB} _{x'} \,x'=0
\]
of the connection \eqref{baksa} with the unit kinetic energy $\frac12 \langle x',x'\rangle_{J}=1$.

\section{Integrability of the Chaplygin ball rolling over a sphere in $\R^n$}

\subsection{Definition of the system}
We consider the Chaplygin ball type problem of rolling without
slipping and twisting of an $n$-dimensional balanced ball of radius $\rho$ (the mass center $C$
coincides with the geometrical center) in the following
nonholonomic problems (for more details, see \cite{Jo6}):

\begin{itemize}
\item[(i)] rolling over outer surface of the $(n-1)$-dimensional
fixed sphere of radius $\sigma$;

\item[(ii)] rolling over inner surface of the $(n-1)$-dimensional
fixed sphere of radius $\sigma$ ($\sigma>\rho$);

\item[(iii)] rolling over outer surface of the $(n-1)$-dimensional
fixed sphere  of radius $\sigma$, but the fixed sphere is within
the rolling ball ($\sigma<\rho$, in this case, the rolling ball is
actually a spherical shell).
\end{itemize}

For the configuration space we take the {direct product} of the Lie
group $SO(n)$ and the sphere $\{\mathbf r\in\R^n\,\vert\,(\mathbf r,\mathbf r)=(\sigma\pm\rho)^2\}$,
where we take "$+$"  for the case (i) and "$-$"  for the cases
(ii) and (iii).
The matrix $g\in SO(n)$ maps a frame attached to the body to the
space frame and  $\mathbf r=\overrightarrow{OC}$
is the position vector of the ball center $C$ in the space frame,
and the origin $O$ coincides with the center of the fixed sphere.

\begin{pspicture}(12,7)
\pscircle[linecolor=black](3.5,3){2}
\psellipticarc[linestyle=dashed](3.5,3)(2,0.8){0}{180}
\psellipticarc(3.5,3)(2,0.8){180}{360}
\psdot[dotsize=2pt](3.5,3)\uput[0](3.5,3.2){$O$}
\psdot[dotsize=2pt](1.7,5.4)\uput[0](1.7,5.6){$C$}
\pscircle[linecolor=black](1.7,5.4){1.03}
\psarc[linewidth=0.03cm](1.7,5.4){0.92}{200}{250}
\psarc[linewidth=0.03cm](1.7,5.4){0.82}{215}{245}
\psline{<-}(1.91,5.1)(2.33,4.6)
\uput[0](1.9,5.1){$\gamma$}
\pscircle[linecolor=black](9,3){2}
\psellipticarc[linestyle=dashed](9,3)(2,0.8){0}{105}
\psellipticarc[linestyle=dashed](9,3)(2,0.8){165}{180}
\psellipticarc(9,3)(2,0.8){180}{360}
\psdot[dotsize=2pt](9,3)\uput[0](9,3.2){$O$}
\psdot[dotsize=2pt](8.0,3.8)\uput[0](8.0,3.8){$C$}
\pscircle[linecolor=black](8.0,3.8){0.703}
\psarc[linewidth=0.03cm](8.0,3.8){0.61}{200}{250}
\psarc[linewidth=0.03cm](8.0,3.8){0.55}{215}{245}
\psline{<-}(7.02,4.63)(7.46,4.27)
\uput[0](7.2,4.7){$\gamma$}
\uput[0](0.8,0.5){Fig 1a: Rolling over a sphere}
\uput[0](6.4,0.5){Fig 1b: Rolling within a sphere}
\end{pspicture}

The condition that the ball is rolling without
slipping and twisting at the contact point determines $\frac12n(n-1)$ constraints in velocities $(\dot g,\dot{\mathbf r})$, and the
corresponding $(n-1)$-dimensional constraint distribution $\mathcal D$ defines a principal connection of the bundle
\begin{equation}\label{principal}
SO(n) \longrightarrow   SO(n) \times S^{n-1} \overset{\pi}\longrightarrow  S^{n-1}
\end{equation}
with respect to the diagonal \emph{left} $SO(n)$-action:
$a\cdot (g,\mathbf r)=(ag,a\mathbf r)$, $a\in SO(n)$  (see \cite{Jo6}).
Here the submersion $\pi$ is given by
\begin{equation}\label{submersion}
\gamma=\pi(g,\mathbf
r)=\frac{1}{\sigma\pm\rho}g^{-1}\mathbf r
\end{equation}
and $\gamma$  is a unit vector, the direction of the contact point in the frame attached to the ball. Thus, the
problem of the rubber rolling of a ball over a fixed sphere
is a $SO(n)$--Chaplygin system and
 reduces to the tangent bundle $TS^{n-1}\cong \mathcal
D/SO(n)$.

Let $\mathbb I: so(n) \to so(n)$ be the inertia
tensor, $m$ mass of the ball, and $(\cdot,\cdot)$ the Euclidean scalar product in $\R^n$.
The reduced
Lagrangian, the reduced metric $g_0$, and
the $(0,3)$--tensor field \eqref{polje}
 on $S^{n-1}$ read (see \cite{Jo6})
\begin{align*}
& L_{red}(\dot\gamma,\gamma) =-\frac{1}{4\epsilon^2}\tr(\mathbf
I(\gamma\wedge\dot\gamma)\circ(\gamma\wedge\dot\gamma) )=-\frac{1}{2\epsilon^2}(\mathbf
I(\gamma\wedge\dot\gamma) \gamma,\dot\gamma ),\\
& \langle X,Y\rangle_0=-\frac{1}{2\epsilon^2}\tr( \mathbf
I(\gamma\wedge X)\circ(\gamma\wedge Y))=-\frac{1}{\epsilon^2}( \mathbf
I(\gamma\wedge X)\gamma,Y),\\
&\Sigma_\gamma(X,Y,Z)=\frac{2\epsilon-1}{2\epsilon^3}\tr(\mathbf
I(\gamma\wedge X)\circ (Y\wedge Z))=\frac{2\epsilon-1}{\epsilon^3}(\mathbf
I(\gamma\wedge X)Y,Z),
\end{align*}
where $\mathbf I=\mathbb I+D\cdot \mathrm{Id}_{so(n)}$, $D=m\rho^2$,
and $\epsilon={\sigma}/({\sigma\pm\rho})$. Note that when radii of sphere and the ball are equal ($\epsilon=\frac12$), $\Sigma_\gamma\equiv0$
and the reduced system is Hamiltonian without a time reparametrization (for $n=3$ see \cite{EK, BMT}).

Thus,  the  equation describing the motion of the
reduced system \eqref{ChaplyginRed} are
\begin{equation}\label{REDUCED}
\Big(\epsilon\frac{d}{dt}\big(\mathbf
I(\gamma\wedge\dot\gamma)\gamma\big)+(1-\epsilon)\mathbf
I(\gamma\wedge\dot\gamma)\dot\gamma,\xi\Big) =0 , \qquad \xi\in
T_\gamma S^{n-1}.
\end{equation}

The system always has an invariant measure (see \cite{Jo5, Jo6}). The density of a measure significantly simplifies for
a special inertia operator
\begin{align}
\label{spec-op} \mathbb I(E_i\wedge E_j)=(a_ia_j-D)E_i\wedge
E_j \quad \text{i.e.}, \quad  \mathbf I(X \wedge Y)=AX\wedge
AY,
\end{align}
where $A=\diag(a_1,\dots,a_n)$.
Then the reduced Lagrangian $L_{red}$, the reduced metric $g_0$, and the equation \eqref{REDUCED} take the form
\begin{align}
\label{LRED} & L_{red} = \frac 1{2\epsilon^2} \left((A\dot \gamma,\dot
\gamma)(A\gamma,\gamma)- (A\gamma,\dot \gamma)^2 \right),\\
&\label{REDM0} \langle X,Y\rangle_0=\frac{1}{\epsilon^2}\big((A X, Y)(A\gamma,\gamma) -(A\gamma, X)(A\gamma,Y)\big),
\quad X,Y\in T_\gamma S^{n-1},\\
& \label{REDUCED2}\epsilon\frac{d}{dt}
\big( (A\gamma,\dot\gamma) A\gamma -
(A\gamma,\gamma) A \dot\gamma\big)=(\epsilon-1)\big( (A\dot\gamma,\dot\gamma) A\gamma -
(A\gamma,\dot\gamma) A \dot\gamma\big)+\lambda\gamma,
\end{align}
respectively. Here, $\lambda$ is the Lagrange multiplier determined by the condition that a trajectory $\gamma(t)$ belongs to the sphere
$\langle \gamma,\gamma\rangle=1$.

\begin{thm}{\rm \cite{Jo6}}\label{REDsym}
Under a time substitution  $d\tau
=\nu(\gamma)=\epsilon(A\gamma,\gamma)^{\frac{1}{2\epsilon}-1}\, dt$, the reduced system
\eqref{REDUCED2} becomes
the geodesic flow of the metric $g_*=\nu^2 g_0$:
\begin{equation}\label{th51}
\langle X,Y\rangle_*=(\gamma, A\gamma)^{\frac{1}{\epsilon}-2}
\big(\big(A X,Y)(A\gamma,\gamma) -(A\gamma,X)(A\gamma,Y)\big),\quad X,Y\in T_\gamma S^{n-1}.
\end{equation}
\end{thm}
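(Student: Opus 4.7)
The plan is to derive this theorem as a direct application of Proposition \ref{stav} in the Chaplygin multiplier case $\alpha=1$ (i.e., $f=\nu$), combined with the Chaplygin reduction of Section 3. From the discussion following Proposition \ref{stav}, it suffices to exhibit a function $\nu$ on $S^{n-1}$ for which the $(1,2)$-tensor $B$ determined by $\langle B(X,Y),Z\rangle_0 = \Sigma_\gamma(X,Y,Z)$ satisfies the gradient-type identity
\begin{equation*}
B(\dot\gamma,\dot\gamma) = \langle \grad_0\ln\nu,\dot\gamma\rangle_0\,\dot\gamma - \langle\dot\gamma,\dot\gamma\rangle_0\,\grad_0\ln\nu,
\end{equation*}
since this is precisely the condition under which \eqref{CaplJedn} (with $V_0\equiv 0$) transforms, after $d\tau=\nu\,dt$, into the geodesic equation of the conformal metric $g_*=\nu^2 g_0$. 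Once verified for the candidate $\nu=\epsilon(A\gamma,\gamma)^{\frac{1}{2\epsilon}-1}$, the explicit form \eqref{th51} of $g_*$ follows immediately by multiplying \eqref{REDM0} by $\nu^2=\epsilon^2(A\gamma,\gamma)^{\frac{1}{\epsilon}-2}$.

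First I would make $\Sigma_\gamma$ explicit for the special inertia operator \eqref{spec-op}. Since $\mathbf{I}(\gamma\wedge X)=A\gamma\wedge AX$ and $(u\wedge v)w=(v,w)u-(u,w)v$, a direct expansion produces
\begin{equation*}
\Sigma_\gamma(X,Y,Z) = \frac{2\epsilon-1}{\epsilon^3}\big((AX,Y)(A\gamma,Z)-(A\gamma,Y)(AX,Z)\big).
\end{equation*}
Then I would read off the differential of $\ln\nu=\ln\epsilon+\big(\tfrac{1}{2\epsilon}-1\big)\ln(A\gamma,\gamma)$ along any tangent $Z$ to the sphere:
\begin{equation*}
Z(\ln\nu) = \frac{1-2\epsilon}{\epsilon}\cdot\frac{(A\gamma,Z)}{(A\gamma,\gamma)}.
\end{equation*}

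The core step is to pair the desired identity with an arbitrary $Z\in T_\gamma S^{n-1}$, substitute the two displays above together with the reduced metric \eqref{REDM0}, and verify that
\begin{equation*}
\frac{X(\nu)}{\nu}\langle X,Z\rangle_0 - \frac{Z(\nu)}{\nu}\langle X,X\rangle_0 = \Sigma_\gamma(X,X,Z).
\end{equation*}
The left-hand side expands into four scalar terms; the two proportional to $(A\gamma,X)^2(A\gamma,Z)/(A\gamma,\gamma)$ cancel identically, leaving exactly the right-hand side. The main obstacle I expect is just this bookkeeping, but conceptually the appearance of the same factor $(A\gamma,\gamma)$ in both the denominator of $d\ln\nu$ and the reduced metric $g_0$ is what forces the exponent $\tfrac{1}{2\epsilon}-1$: matching the scalar $(1-2\epsilon)/\epsilon^3$ coming from $d\ln\nu$ combined with $g_0$ against the coefficient $(2\epsilon-1)/\epsilon^3$ in $\Sigma_\gamma$ (up to sign) pins down the power uniquely, and also makes transparent why the balanced case $\epsilon=\tfrac12$ needs no rescaling (there $\Sigma_\gamma\equiv 0$). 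With the identity established, Proposition \ref{stav} delivers the geodesic equation of $g_*$, proving the theorem.
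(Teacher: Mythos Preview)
Your argument is correct. The paper itself does not supply a proof of this theorem: it is stated with attribution to \cite{Jo6} and left unproved. Your derivation is, however, exactly the route the paper's own Section~4 is built to furnish---the discussion immediately after Proposition~\ref{stav} identifies the Chaplygin-multiplier condition as $-B(\dot x,\dot x)=-\langle\grad_0\ln\nu,\dot x\rangle_0\dot x+\langle\dot x,\dot x\rangle_0\grad_0\ln\nu$, and you verify this identity for the specific $\nu$ by a clean pairing with an arbitrary $Z$. The algebra is right: the two terms proportional to $(A\gamma,X)^2(A\gamma,Z)/(A\gamma,\gamma)$ indeed cancel, and the remaining expression matches $\Sigma_\gamma(X,X,Z)$ with the correct coefficient $(2\epsilon-1)/\epsilon^3$. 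So you have supplied a self-contained proof, using the paper's machinery, of a result the paper only quotes.
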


The procedure of reduction for rubber rolling over a sphere for
$n=3$ is given by Ehlers and Koiller \cite{EK}. In this
case the system is always Hamintonizable due to the fact that it
has an invariant measure and that the reduced configuration space
is 2--dimensional. In 3-dimensional case all inertia operators can be written in the form \eqref{spec-op}
and Theorem \ref{REDsym} reduces to the one
given in \cite{EK}.

\subsection{Integrability for
$\rho=2\sigma$}

Remarkably, for $n=3$, Borisov and Mamaev proved the
integrability of the rubber rolling for a specific ratio between radiuses of the ball and the spherical shell
(the case (iii), where $\rho=2\sigma$, i.e, $\epsilon=-1$), see \cite{BM2}.
We proceed in proving the complete integrability of the $n$-dimensional variant of the problem.

\begin{lem}\label{lemica}
Under the mapping
\begin{equation}
\label{transf} x=\frac{A^\frac12\gamma}{\sqrt{(A\gamma,\gamma)}}.
\end{equation}
the metric \eqref{th51} transforms to the metric
\begin{equation}\label{novaG}
\mathbf g(X,Y)=(x, A^{-1}x)^{-\frac{1}{\epsilon}}{(X,Y)}, \qquad X,Y\in T_{x} S^{n-1},
\end{equation}
conformally equivalent to the standard metric on the sphere
\begin{equation}
(x,x)=1.  \label{x1}
\end{equation}
\end{lem}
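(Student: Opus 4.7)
The plan is a direct verification: compute the pullback of the right-hand side of \eqref{novaG} under $\phi\colon \gamma\mapsto x=A^{\frac12}\gamma/\sqrt{(A\gamma,\gamma)}$ and match it to \eqref{th51}. First I would record two scalar identities that organise everything. Since $(A^{\frac12}\gamma,A^{\frac12}\gamma)=(A\gamma,\gamma)$, the definition immediately gives $(x,x)=1$, so $\phi$ really does map the Euclidean unit sphere to itself. Next, $A^{-1}x=A^{-\frac12}\gamma/\sqrt{(A\gamma,\gamma)}$, hence
\[
(x,A^{-1}x)=\frac{(A^{\frac12}\gamma,A^{-\frac12}\gamma)}{(A\gamma,\gamma)}=\frac{(\gamma,\gamma)}{(A\gamma,\gamma)}=\frac{1}{(A\gamma,\gamma)},
\]
so the conformal factor in \eqref{novaG} becomes $(x,A^{-1}x)^{-1/\epsilon}=(A\gamma,\gamma)^{1/\epsilon}$. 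Together with the scaling the calculation below produces, this is how the exponent $\frac{1}{\epsilon}-2$ in \eqref{th51} is meant to appear.

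Second, I would differentiate $\phi$. For $V\in T_\gamma S^{n-1}$, writing $s=\sqrt{(A\gamma,\gamma)}$,
\[
d\phi(V)=\frac{A^{\frac12}V}{s}-\frac{(A\gamma,V)}{s^{3}}\,A^{\frac12}\gamma,
\]
and similarly for $W$. Pulling back the Euclidean metric and using $(A^{\frac12}U,A^{\frac12}Z)=(AU,Z)$, one expands
\[
(d\phi(V),d\phi(W))=\frac{(AV,W)}{s^{2}}-\frac{(A\gamma,V)(A\gamma,W)}{s^{4}}-\frac{(A\gamma,W)(A\gamma,V)}{s^{4}}+\frac{s^{2}(A\gamma,V)(A\gamma,W)}{s^{6}},
\]
and the three last terms collapse to a single $-(A\gamma,V)(A\gamma,W)/s^{4}$. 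Thus
\[
(d\phi(V),d\phi(W))=\frac{1}{(A\gamma,\gamma)^{2}}\bigl((A\gamma,\gamma)(AV,W)-(A\gamma,V)(A\gamma,W)\bigr).
\]

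Third, multiplying by the conformal factor $(A\gamma,\gamma)^{1/\epsilon}$ obtained in the first step gives
\[
\mathbf g(d\phi(V),d\phi(W))=(A\gamma,\gamma)^{\frac{1}{\epsilon}-2}\bigl((A\gamma,\gamma)(AV,W)-(A\gamma,V)(A\gamma,W)\bigr),
\]
which is precisely $\langle V,W\rangle_{*}$ in \eqref{th51}. Since $\phi$ is a diffeomorphism of unit spheres (the inverse being $\gamma=A^{-\frac12}x/\sqrt{(A^{-1}x,x)}$), this establishes the isometry claim. The computation is entirely algebraic; the only delicate point, and the thing worth verifying carefully, is the cancellation that collapses four terms into two in the pullback, driven by the identity $(A^{\frac12}\gamma,A^{\frac12}\gamma)=s^{2}$.
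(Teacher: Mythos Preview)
Your proof is correct and complete. The paper states Lemma~\ref{lemica} without proof, evidently treating it as a routine verification; your direct pullback computation is exactly the calculation one would supply, and the key identities $(x,A^{-1}x)=(A\gamma,\gamma)^{-1}$ and the collapse of the cross terms in $(d\phi(V),d\phi(W))$ are handled correctly.
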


Considered a natural mechanical system on the sphere \eqref{x1}
with the Lagrangian
\begin{align*}
L_\epsilon=\frac12\big(\frac{dx}{ds},\frac{dx}{ds}\big)-V_{\epsilon}(x), \quad V_{\epsilon}(x)=-(A^{-1}x,x)^{-\frac{1}{\epsilon}}.
\end{align*}

\begin{prop}\label{jakobi}
The trajectories $\gamma(t)$ of the rolling of a rubber Chaplygin ball over
a spherical surface determined by equation \eqref{REDUCED2}, with the unit kinetic energy $\frac12\langle \dot\gamma,\dot\gamma\rangle_0=1$, under the
transformation \eqref{transf} and
a time reparametrisation
\begin{align*}
ds=\epsilon (A^{-1} x,x)^{1+\frac{1}{2\epsilon}}dt\quad(=\epsilon (A\gamma,\gamma)^{-1-\frac{1}{2\epsilon}}dt),
\end{align*}
are mapped to the zero-energy trajectories $x(s)$  of the natural
mechanical systems with the Lagrangian $L_\epsilon$:
\begin{equation}\label{prirodni}
\frac{d^2}{ds^2}x=-\frac{2}{\epsilon}\big(A^{-1}x,x\big)^{-\frac{1}{\epsilon}-1}A^{-1}x+\lambda x, \quad
\lambda=\frac{2}{\epsilon}\big(A^{-1}x,x\big)^{-\frac{1}{\epsilon}}-\big(\frac{dx}{ds},\frac{dx}{ds}\big).
\end{equation}
\end{prop}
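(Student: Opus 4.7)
The plan is to chain three ingredients already in the paper: Theorem \ref{REDsym} rewrites the reduced equation \eqref{REDUCED2} as a geodesic flow of $g_*$ after one time change; Lemma \ref{lemica} identifies $g_*$ with a conformally round metric $\mathbf g$ on the unit sphere; and the nonholonomic Maupertuis principle from Section \ref{conformal} recognises any such conformally round metric as the Jacobi metric of a natural mechanical system.

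First, by Theorem \ref{REDsym}, a trajectory $\gamma(t)$ of \eqref{REDUCED2} with $\frac{1}{2}\langle\dot\gamma,\dot\gamma\rangle_0=1$ becomes, after $d\tau=\epsilon(A\gamma,\gamma)^{\frac{1}{2\epsilon}-1}\,dt$, a unit-speed geodesic of $g_*$ from \eqref{th51} (the conformal rescaling preserves the kinetic-energy level since $\gamma' = \dot\gamma/\nu$ gives $\langle\gamma',\gamma'\rangle_* = \nu^2\langle\gamma',\gamma'\rangle_0 = \langle\dot\gamma,\dot\gamma\rangle_0$). Applying Lemma \ref{lemica}, the substitution \eqref{transf} then transports these isometrically to unit-speed geodesics of $\mathbf g(X,Y)=(A^{-1}x,x)^{-1/\epsilon}(X,Y)$ on the sphere $(x,x)=1$.

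The decisive observation is that, with $V_\epsilon(x)=-(A^{-1}x,x)^{-1/\epsilon}$ and the choice $h=0$, the conformal factor of $\mathbf g$ is exactly $h-V_\epsilon$, so $\mathbf g$ is the Jacobi metric of $L_\epsilon$ on the round sphere at the zero energy level. The Maupertuis principle from Section \ref{conformal}, applied in reverse, then says that unit-speed geodesics of $\mathbf g$, after the further reparametrization $ds=(h-V_\epsilon)^{-1}\,d\tau=(A^{-1}x,x)^{1/\epsilon}\,d\tau$, satisfy the Newton equations of $L_\epsilon$ at energy $h=0$. To compose the two time changes I would use the identity
\[
(A^{-1}x,x)=\frac{(\gamma,\gamma)}{(A\gamma,\gamma)}=\frac{1}{(A\gamma,\gamma)},
\]
which comes directly from \eqref{transf} and $(\gamma,\gamma)=1$; combining the two rescalings produces precisely $ds=\epsilon(A^{-1}x,x)^{1+\frac{1}{2\epsilon}}\,dt$.

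It remains to spell out the constrained Euler--Lagrange equations for $L_\epsilon$ on $(x,x)=1$. A direct computation gives $\grad V_\epsilon=\frac{2}{\epsilon}(A^{-1}x,x)^{-1/\epsilon-1}A^{-1}x$, so the constrained Newton equation reads $x''=-\grad V_\epsilon+\lambda x$, and the Lagrange multiplier is fixed by differentiating $(x,x)=1$ twice, which yields $\lambda=\frac{2}{\epsilon}(A^{-1}x,x)^{-1/\epsilon}-(x',x')$; this is exactly \eqref{prirodni}. The only genuine obstacle is the careful bookkeeping of the two time reparametrizations: the identity $(A\gamma,\gamma)(A^{-1}x,x)=1$ is what makes the exponents $\frac{1}{\epsilon}$ and $\frac{1}{2\epsilon}-1$ telescope into the advertised $1+\frac{1}{2\epsilon}$.
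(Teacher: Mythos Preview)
Your argument is correct and follows the same route as the paper's own proof: you combine Theorem \ref{REDsym} and Lemma \ref{lemica} to pass from \eqref{REDUCED2} to unit-speed geodesics of the conformally round metric $\mathbf g$, then invoke the Maupertuis principle (in reverse) to identify those geodesics with zero-energy trajectories of $L_\epsilon$, and finally compose the two time reparametrisations via $(A^{-1}x,x)=(A\gamma,\gamma)^{-1}$. The paper does exactly this, only presenting the Maupertuis step first; your bookkeeping of the exponents and the derivation of the Lagrange multiplier in \eqref{prirodni} are both correct.
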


\begin{proof}
According to the {Maupertuis principle}, the trajectories $x(s)$ of the system with Lagrangian $L_\epsilon$ laying on the zero-energy invariant surface
\begin{equation}\label{IS}
\frac12(\frac{dx}{ds},\frac{dx}{ds})-(A^{-1}x,x)^{-\frac{1}{\epsilon}}=0,
\end{equation}
after a time
reparametrization
\[
d\tau=(A^{-1}x,x)^{-\frac{1}{\epsilon}}ds,
\]
become the geodesic lines $x(\tau)$ of \eqref{novaG} with the unit kinetic energy $\frac12 \mathbf g (x',x')=1$ ($x'=dx/d\tau$). On the other hand,
from  Theorem \ref{REDsym} and Lemma \ref{lemica},
the solutions $\gamma(t)$ of the equation \eqref{REDUCED2}, after a time reparametrization
\[
d\tau=\epsilon(A\gamma,\gamma)^{\frac{1}{2\epsilon}-1}\, dt=\epsilon (A^{-1} x,x)^{1-\frac{1}{2\epsilon}}dt
\]
become the geodesic lines $x(\tau)$ of the metric \eqref{novaG} with the same kinetic energy
\[
\frac12\langle \dot\gamma,\dot\gamma\rangle_0=\frac12\langle \gamma',\gamma'\rangle_*=\frac12\mathbf g(x',x')
\]
(see the discussion after Proposition \ref{stav}).
Combining the above transformations we obtain the proof of the statement.
\end{proof}

\begin{pspicture}(12,6)
\pscircle[linecolor=black](5.2,4.1){1}
\psellipticarc[linestyle=dashed](5.2,4.1)(1,0.4){0}{180}
\psellipticarc(5.2,4.1)(1,0.4){180}{360}
\psdot[dotsize=2pt](5.2,4.1)\uput[0](4.5,4.1){$O$}
\psdot[dotsize=2pt](6.0,3.5)\uput[0](6.0,3.5){$C$}
\pscircle[linecolor=black](6,3.5){2}
\psarc[linewidth=0.03cm](6,3.5){1.88}{200}{250}
\psarc[linewidth=0.03cm](6,3.5){1.80}{215}{245}
\psline{->}(4.38,4.68)(3.57,5.29)
\uput[0](3.9,5.1){$\gamma$}
\uput[0](0,0.6){Fig 2: Rolling shell over a fixed sphere placed inside: integrable case $\epsilon=-1$}
\end{pspicture}

\begin{remark}
Proposition \ref{jakobi} can be seen as a variant of the construction given in Section \ref{conformal} as well,
where we take
$\nu(\gamma)=\epsilon (A\gamma,\gamma)^{-1-\frac{1}{2\epsilon}}$ and
$f(\gamma)=\epsilon (A\gamma,\gamma)^{-1}$.
\end{remark}

Among the potentials $V_{\epsilon}$, there are two
exceptional ones determining completely integrable systems: for $\epsilon=+1$ we have Braden's \cite{Br} and
for $\epsilon=-1$  Neumann's potential \cite{Moser}.

\begin{thm}\label{glavna}
For an inertia operator \eqref{spec-op} and $\rho=2\sigma$ $(\epsilon=-1)$, the reduced problem of the rolling of a rubber Chaplygin ball over
a spherical surface is completely integrable:
under the
transformation \eqref{transf} and a
time reparametrisation
\begin{align*}
ds=- (A^{-1} x,x)^{\frac{1}{2}}dt\quad(=- (A\gamma,\gamma)^{-\frac{1}{2}}dt),
\end{align*}
the solutions $\gamma(t)$ of \eqref{REDUCED2} with the unit kinetic energy $\frac12\langle \dot\gamma,\dot\gamma\rangle_0=1$
are mapped to the zero-energy trajectories $x(s)$  of the Neumann system with Lagrangian $L_{-1}$.
\end{thm}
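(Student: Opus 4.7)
The plan is to derive Theorem~\ref{glavna} as a direct specialization of Proposition~\ref{jakobi} to $\epsilon=-1$, together with the classical Liouville integrability of the Neumann system.

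First I would substitute $\epsilon=-1$ into Proposition~\ref{jakobi}. The potential becomes $V_{-1}(x)=-(A^{-1}x,x)^{-1/(-1)}=-(A^{-1}x,x)$, so the Lagrangian $L_{-1}=\frac{1}{2}(x',x')+(A^{-1}x,x)$ on the sphere $(x,x)=1$ is precisely that of the classical Neumann oscillator (with symmetric operator $B=-2A^{-1}$). The time reparametrization $ds=\epsilon(A^{-1}x,x)^{1+1/(2\epsilon)}\,dt$ specializes to $ds=-(A^{-1}x,x)^{1/2}\,dt$, matching the statement of the theorem. Proposition~\ref{jakobi} then asserts that the unit-kinetic-energy trajectories $\gamma(t)$ of \eqref{REDUCED2} correspond bijectively to the zero-energy trajectories $x(s)$ of this Neumann system, which is the second assertion of the theorem.

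Next I would invoke complete integrability of the Neumann system on $S^{n-1}$ (Neumann, Moser~\cite{Moser}, Uhlenbeck): it admits $n-1$ independent, pairwise Poisson-commuting first integrals $F_1,\dots,F_{n-1}$ of Uhlenbeck--Moser type, quadratic in momenta plus a potential. By Theorem~\ref{REDsym}, the reduced flow \eqref{REDUCED2}, after the Chaplygin multiplier time change, becomes the geodesic flow of the metric $g_*$ on $T^*S^{n-1}$, a genuine Hamiltonian system. Lemma~\ref{lemica} together with the Maupertuis principle of Section~\ref{conformal} then identifies this geodesic flow at unit kinetic energy with the Neumann flow at zero energy, under the symplectic change induced by \eqref{transf} and the additional time rescaling of Proposition~\ref{jakobi}. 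Pulling back $F_1,\dots,F_{n-1}$ along this correspondence yields $n-1$ pairwise Poisson-commuting first integrals of the geodesic flow on the unit-$g_*$-energy hypersurface, and homogeneity in momenta of the geodesic Hamiltonian extends them to commuting integrals on $T^*S^{n-1}$, establishing Liouville integrability.

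The step I expect to require the most care is the clean transfer of Poisson commutativity across the two time reparametrizations (Chaplygin multiplier of Theorem~\ref{REDsym} and Maupertuis rescaling). Pairwise commutativity of functions on a symplectic manifold does not depend on which Hamiltonian generates the flow, so commutativity of the pullbacks follows immediately from that of the Uhlenbeck--Moser integrals; commutativity with the geodesic Hamiltonian on the unit-energy level follows from Maupertuis, and the homogeneous extension off the unit level preserves this commutativity because the geodesic Hamiltonian is homogeneous of degree two in momenta. In this way integrability of \eqref{REDUCED2} for $\rho=2\sigma$ is genuinely inherited from the classical integrability of the Neumann system, completing the argument.
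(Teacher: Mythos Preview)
Your proposal is correct and follows essentially the same approach as the paper: Theorem~\ref{glavna} is stated immediately after Proposition~\ref{jakobi} with no separate proof, relying on the specialization $\epsilon=-1$ (which reproduces the Neumann potential and the stated time change) together with the classical integrability of the Neumann system cited from~\cite{Moser}. Your additional discussion of how Poisson commutativity survives the two time reparametrizations is more careful than what the paper makes explicit, but it does not diverge from the paper's line of argument.
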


If the radius $\sigma$ of the fixed sphere  tends to infinity, the parameter $\epsilon$ tends to 1, and the system transforms to the rolling of the rubber Chaplygin ball over a horizontal hyperplane in $\R^n$ \cite{Jo3} (for $n=3$, see \cite{EKR}).
The equation \eqref{REDUCED2} for $\epsilon=1$,
\begin{equation}\label{REDUCED3}
\frac{d}{dt}
\big( (A\gamma,\dot\gamma) A\gamma -
(A\gamma,\gamma) A \dot\gamma\big)=\lambda\gamma,
\end{equation}
coincides with the reduced equation of the nonholonomic Veselova problem studied in \cite{FeJo}.
In the 3-dimensional case the paper \cite{VeVe2} established a  relation between the Veselova
problem and the Neumann system.
This result was generalized in \cite{FeJo} as follows.
Under a time reparameterization
\begin{equation*}
ds_1=\sqrt{\frac{(A\dot\gamma,\dot \gamma)(A\gamma,\gamma)- (A\gamma,\dot \gamma)^2 } {(A\gamma,\gamma)}}\, dt  \label{4.6}
\end{equation*}
the solutions $x(t)=\gamma(t)$ of \eqref{REDUCED3}
 transform to solutions of the Neumann problem
on the sphere \eqref{x1} with the potential $V(x)=\frac12 (A^{-1} x,x)$,
\[
\frac{d^2}{ds_1^2}x=-A^{-1}x+\lambda x, \quad \lambda=\big(A^{-1}x,x\big)-\big(\frac{dx}{ds_1},\frac{dx}{ds_1}\big),
\]
that belong to the invariant set
$\big( A\frac{dx}{ds_1} ,\frac{dx}{ds_1} \big) (Ax,x)-
\big( Ax, \frac{dx}{ds_1} \big)^2-(A x,x)=0$.
From Proposition \ref{jakobi} we get another trajectory isomorphism.

\begin{thm}\label{veselovaRevised}
Under the
transformation \eqref{transf} and
a time reparametrization
$ds=(A^{-1} x,x)^{\frac{3}{2}}dt$,
the unit kinetic energy trajectories $\gamma(t)$ of \eqref{REDUCED3}
are mapped to the zero-energy trajectories $x(s)$ of the Braden system with the Lagrangian $L_{+1}$.
\end{thm}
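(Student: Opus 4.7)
The plan is to obtain Theorem \ref{veselovaRevised} as the $\epsilon=1$ specialization of Proposition \ref{jakobi}. First I would verify that equation \eqref{REDUCED3} is precisely equation \eqref{REDUCED2} at $\epsilon=1$: the coefficient $(\epsilon-1)$ on the right-hand side vanishes, killing the $\big((A\dot\gamma,\dot\gamma)A\gamma - (A\gamma,\dot\gamma)A\dot\gamma\big)$ term, while the factor $\epsilon$ in front of $\frac{d}{dt}(\cdots)$ becomes $1$. Hence the Veselova dynamics \eqref{REDUCED3} is the limiting case of the rubber Chaplygin ball as the fixed sphere radius $\sigma$ tends to infinity, exactly as noted in the text just before the theorem, and the reduced metric $g_0$ and unit kinetic energy condition carry over without change.

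Next I would substitute $\epsilon=1$ into the time reparametrization of Proposition \ref{jakobi},
\[
ds=\epsilon\,(A^{-1}x,x)^{1+\frac{1}{2\epsilon}}\,dt = (A^{-1}x,x)^{3/2}\,dt,
\]
which matches the rescaling in the theorem's statement. The target potential becomes $V_{+1}(x)=-(A^{-1}x,x)^{-1}$, so that $L_{+1}=\frac{1}{2}(x',x')+(A^{-1}x,x)^{-1}$ is the Braden Lagrangian identified in the introduction. The unit kinetic energy condition $\frac12\langle\dot\gamma,\dot\gamma\rangle_0=1$ for $\gamma(t)$ is already built into Proposition \ref{jakobi} and translates (via the Maupertuis step) to the zero-energy surface \eqref{IS} for $x(s)$ with $\epsilon=1$.

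The main point to check is that none of the conformal rescalings in the chain Theorem \ref{REDsym} $\to$ Lemma \ref{lemica} $\to$ Proposition \ref{jakobi} degenerate at $\epsilon=1$. Inspecting each step: the metric $g_*$ of Theorem \ref{REDsym} is nondegenerate whenever $(A\gamma,\gamma)\ne 0$, automatic for positive-definite $A$; the substitution \eqref{transf} and the passage to $\mathbf{g}$ in Lemma \ref{lemica} go through for the same reason; and the Maupertuis rescaling is valid on the region $V_{+1}(x)<0$, i.e.\ $(A^{-1}x,x)>0$, which again holds on the entire sphere \eqref{x1}. Thus no obstruction appears and the theorem follows as an immediate corollary. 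The real content is the comparison with the Neumann-system trajectory isomorphism of \cite{FeJo} recalled just above the theorem: our $\epsilon=1$ specialization produces a distinct, but equally explicit, trajectory isomorphism between the Veselova flow and the Braden system on the zero-energy level set.
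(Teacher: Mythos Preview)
Your proposal is correct and follows exactly the route the paper intends: the sentence ``From Proposition \ref{jakobi} we get another trajectory isomorphism'' immediately preceding the theorem signals that Theorem \ref{veselovaRevised} is simply the specialization $\epsilon=+1$ of Proposition \ref{jakobi}, and your checks (that \eqref{REDUCED3} is \eqref{REDUCED2} at $\epsilon=1$, that the time rescaling reduces to $ds=(A^{-1}x,x)^{3/2}dt$, and that none of the conformal factors degenerate) are precisely the verifications needed.
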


\subsection{Separation of variables in the case $\rho=2\sigma$}\label{razdvajanje}

Here we assume $a_1>a_2>...>a_n>0$ and $\epsilon=-1$.
In the three-dimensional case Borisov and Mamaev constructed separating variables of the system as a deformation of sphero-conical coordinates \cite{BM2}. A similar type of deformations are used in \cite{BMF, Ts, Ts2}, where they are called nonholonomic deformations of sphero-conical coordinates or quasi-sphero-conical coordinates. We will show how,
starting from separation variables of the Neumann system, one gets an explanation of, in some sense, unusual choice of variables in \cite{BM2}.

It is well known that the Hamilton-Jacobi equation for a $n$-dimensional Neumann system
can be solved by the separation of variables in sphero-conical coordinates $u_1,...,u_{n-1}$ (see Moser \cite{Moser}).
Thus, using Theorem \ref{glavna}, after a time reparametrization, the rolling ball system separates  in coordinates $u_1,...,u_{n-1}$.
For the potential $V(x)=-(A^{-1} x,x)$
they are defined by the equations (see \cite{Moser}):
$$
\sum\limits_{i=1}^{n}\frac{x_i^2}{z-a_i^{-1}}=\prod\limits_{i=1}^{n-1}\frac{(z-u_i)}{U(z)}, \quad U(z)=\prod\limits_{j=1}^{n}(z-a^{-1}_j),
$$
and
$
a_1^{-1}<u_1<a_2^{-1}<u_2<\cdots a_{n-1}^{-1}<u_{n-1}<a_n^{-1}.
$
Formulas for $x_i$ are \cite{Moser}:
$$
x_i^2=\frac{(a_i^{-1}-u_1)(a_i^{-1}-u_2)\dots(a_i^{-1}-u_{n-1})}{(a_i^{-1}-a_1^{-1})(a_i^{-1}-a_2^{-1})\dots(a_i^{-1}-a_n^{-1})}, \quad i=1,\dots,n.
$$

Therefore, from \eqref{transf} one gets
\begin{equation}\label{sferokon}
\gamma_i^2=\frac{(a_i^{-1}-u_1)(a_i^{-1}-u_2)\dots(a_i^{-1}-u_{n-1})}{a_i\nu^2(a_i^{-1}-a_1^{-1})(a_i^{-1}-a_2^{-1})\dots(a_i^{-1}-a_n^{-1})}, \quad i=1,\dots,n,
\end{equation}
where
$
\nu^2=\langle A^{-1}x,x\rangle={\langle A\gamma,\gamma\rangle}^{-1}={a_1}^{-1}+{a_2}^{-1}+\dots+{a_n}^{-1}-u_1-u_2-...-u_{n-1}.
$

Let us consider briefly three-dimensional case and compare the above formulas with those from \cite{BM2}.
The operator $\mathbf{I}$ (see \eqref{spec-op}), after the identification $so(3)\cong \mathbb{R}^3$, corresponds to the matrix $\mathbf{J}=\diag(J_1,J_2,J_3)$ used in \cite{BM2} and we have
\begin{equation}\label{AuJ}
J_1=a_2 a_3, \quad J_2=a_1 a_3, \quad J_3=a_1 a_2.
\end{equation}

In \cite{BM2} separating coordinates $u,v$ are defined by the equation
\footnote{There is a typo in the equation \eqref{bmuv} in \cite{BM2}, where $\eta^2$ is missing. Further, to have a full correspondence with \cite{BM2}, $\gamma_i$ should be denoted by $n_i$ and $\eta^2$ by $\rho^2$.}
\begin{equation}
\frac{1}{\eta^2}\sum\limits_{i=1}^{3}\frac{\gamma_i^2}{(J_i-z)J_i}=\frac{(z-u)(z-v)}{J_1J_2J_3(J_1-z)(J_2-z)(J_3-z)},
\label{bmuv}
\end{equation}
where $0<J_1<u<J_2<v<J_3$,
and
$\eta^2=\langle \mathbf{J}^{-1}\gamma,\gamma\rangle$.
From \eqref{bmuv} it follows \cite{BM2}
\begin{equation}\label{borismam}
\begin{aligned}
\gamma_1^2&=\eta^2\frac{J_1(J_1-u)(J_1-v)}{(J_1-J_2)(J_1-J_3)},\\
\gamma_2^2&=\eta^2\frac{J_2(J_2-u)(J_2-v)}{(J_2-J_1)(J_2-J_3)},\\
\gamma_3^2&=\eta^2\frac{J_3(J_3-u)(J_3-v)}{(J_3-J_1)(J_3-J_2)},\quad
\eta^2 =(J_1+J_2+J_3-u-v)^{-1}.
\end{aligned}
\end{equation}

Using \eqref{AuJ} we get $\eta^2=(a_1a_2a_3)^{-1}\nu^{-2}$ and
the expressions \eqref{borismam} and \eqref{sferokon} (for $n=3$) coincide, where $u=a_1a_2a_3 u_1$ and $v=a_1a_2a_3 u_2$.
Geometrically, the classical sphero-conical coordinates $u_1,u_2$ are defined as the intersection of family of confocal conics
\begin{equation}
\mathcal Q_w:\quad \frac{x_1^2}{a_1^{-1}-w}+\frac{x_2^2}{a_2^{-1}-w}+\frac{x_3^2}{a_3^{-1}-w}=0
\label{sk}
\end{equation}
with the unit sphere $\langle x,x\rangle=1$.
The inverse of transformation \eqref{transf} together with $z=a_1a_2a_3w$ maps the intersection of the family of confocal conics \eqref{sk} with the sphere $\langle x,x\rangle=1$ to
the intersection of conics
\begin{equation}
\mathcal K_z: \quad \frac{a_1\gamma_1^2}{a_2a_3-z}+\frac{a_2\gamma_2^2}{a_1a_3-z}+\frac{a_3\gamma_3^2}{a_1a_2-z}=0.
\label{sk1}
\end{equation}
with the sphere $\langle \gamma,\gamma\rangle=1$. Therefore,
by the use of \eqref{AuJ}, one gets that coordinates $u,v$ in \eqref{bmuv} define conics $\mathcal K_u$, $\mathcal K_v$ from the family \eqref{sk1}, such that $\gamma\in \mathcal K_u\cap\mathcal K_v$.

Note that Ehlers and Koiller found the Chaplygin multiplier using the classical sphero-conical variables defined by the inertia operator of the ball, in which the system is not separable (for more details, see \cite{EK}).

\subsection{Noncommutative integrability of a symmetric case ($\epsilon=\pm 1$)}
For any pair of equal parameters $a_i=a_j$, the geodesic flow of the metric $g_*$ given by \eqref{th51} (for any value of the parameter $\epsilon$), has the additional Noether integral (e.g, see \cite{Jo7})
\begin{equation}\label{neter1}
\Phi_{ij}(\gamma',\gamma)=\gamma_i \frac{\partial L_*}{\partial \gamma_j'} - \gamma_j \frac{\partial L_*}{\partial \gamma_i'},
\end{equation}
i.e., the natural mechanical system \eqref{prirodni} preserves the function
\begin{equation}\label{neter2}
\tilde\Phi_{ij}\big(\frac{dx}{ds},x\big)=x_i\frac{dx_j}{ds}-x_j\frac{dx_i}{ds}.
\end{equation}

If we have at least three equal parameters, the systems are
integrable according to the non-commutative version of the
Liouville theorem. More precisely, assume $\{1,2,\dots,n\}=I_0\cup I_1 \cup \dots \cup I_r$, $I_i \cap I_j=\emptyset$,
\[
a_{i_0}=\alpha_0,  \, i_0 \in I_0,\quad a_{i_1}=\alpha_1, \, i_1\in I_1,\quad \dots \quad
a_{i_r}=\alpha_r, \,i_r \in I_r,
\]
$\alpha_i\ne\alpha_j, i\ne j$.
Then the geodesic flow of \eqref{th51} and the system \eqref{prirodni} are $SO(\vert I_0\vert)\times \dots \times SO(\vert I_r\vert)$--symmetric with Noether's integrals \eqref{neter1} and \eqref{neter2},
respectively. For $\epsilon=\pm 1$ they
are completely integrable in a non-commutative
sense by means of Noether's integrals and commuting integrals that are certain limits of integrals of a non-symmetric case
(e.g, see \cite{DDB, JB}, where a detail analysis for natural mechanical systems on a symmetric ellipsoid is given).
The corresponding Hamiltonian flows on the cotangent bundle of a sphere $S^{n-1}$ are generically quasi-periodic over invariant isotropic tori of dimension
\[
N=r+\sharp\{I_i\, \vert \vert I_i\vert \ge 2\}.
\]

\begin{remark}
For $a_i=a_j$, the function \eqref{neter1} in the original time $t$ takes the form
\[
\phi_{ij}(\dot\gamma,\gamma)=\nu(\gamma)\big(\gamma_i \frac{\partial L_{red}}{\partial \dot\gamma_j} - \gamma_j \frac{\partial L_{red}}{\partial \dot\gamma_i}\big)=
\frac{a_i}{\epsilon}(A\gamma,\gamma)^{\frac{1}{2\epsilon}}\big(\gamma_i\dot\gamma_j-\gamma_j\dot\gamma_i\big).
\]
One can relate the integral $\phi_{ij}$ of the reduced system \eqref{REDUCED2} with the $SO(n)$--reduction of the corresponding Noether function on the configuration space
$SO(n)\times S^{n-1}(g,\mathbf r)$ (e.g., see \cite{FNS, Jo8}).
\end{remark}

\subsection{Integrability for $\epsilon\ne \pm 1$}
Firstly note that in the case of $SO(n)$--symmetry, when $A$ is proportional to the identity matrix, the metrics \eqref{REDM0} and \eqref{th51} are proportional to the standard metric on a sphere and trajectories of \eqref{REDUCED} are great circles for all $\epsilon$.
Further, let us take $n=4$ and $a_1=a_2\ne a_3=a_4$. Then the complete
set of commuting integrals of the geodesic flow of \eqref{th51} is $\Phi_{12}$, $\Phi_{34}$, and the kinetic energy (the Lagrangian) $L_*=\frac12\langle\gamma',\gamma'\rangle_*$. 
Similarly, in the case of $SO(l)\times SO(n-l)$--symmetry, we have foliation on three-dimensional tori (or two-dimensional tori when $l=1$) and noncommuative integrability of the geodesic flow of the metric \eqref{th51} by means of Noether integrals \eqref{neter1} and the kinetic energy $L_*$ for any parameter $\epsilon$.

\begin{thm} \label{simetricna}
For the inertia operator \eqref{spec-op}, where
\begin{equation}\label{sing}
a_1=a_2=\dots=a_{l} \ne a_{l+1}=a_{l+2}=\dots=a_n,
\end{equation}
the reduced system \eqref{REDUCED} is integrable: generic motions, up to a time reparametrisation, are quasi periodic over  three dimensional invariant tori. For $l=1$ or $l=n-1$, the invariant tori are two-dimensional.
\end{thm}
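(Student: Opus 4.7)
The plan is to apply Theorem \ref{REDsym} to reduce to the geodesic flow of $g_*$ on $S^{n-1}$, and then exploit the $SO(l)\times SO(n-l)$ isometry group of $g_*$ to invoke the Mishchenko--Fomenko theorem on noncommutative integrability.

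First I would observe that, by Theorem \ref{REDsym}, after the time substitution $d\tau=\nu(\gamma)\,dt$ with $\nu(\gamma)=\epsilon(A\gamma,\gamma)^{1/(2\epsilon)-1}$, the trajectories of \eqref{REDUCED} become reparametrisations of geodesics of $g_*$. Since $A$ is positive definite, $\nu$ is smooth and nowhere vanishing on $S^{n-1}$, so quasi-periodicity on $(S^{n-1},g_*)$ translates directly into the statement ``quasi-periodic up to a time reparametrisation'' for the reduced system. Under \eqref{sing} one has $A=\alpha_0 I_l\oplus\alpha_1 I_{n-l}$, which commutes with every block-diagonal $R\in G=SO(l)\times SO(n-l)$; inspection of \eqref{th51} shows that $g_*$ is $G$-invariant, so $L_*=\tfrac12\langle\gamma',\gamma'\rangle_*$ is a $G$-invariant Hamiltonian on $T^*S^{n-1}$. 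The corresponding moment map has components $J^{(1)}_{ij}=\gamma_i\pi_j-\gamma_j\pi_i$ for $i,j\in I_0=\{1,\dots,l\}$, and analogously $J^{(2)}$ for $I_1=\{l+1,\dots,n\}$; these coincide (up to a nonzero factor) with the Noether integrals \eqref{neter1}.

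Next I would let $\mathcal F$ be the Poisson algebra generated by $L_*$ and the Noether integrals and compute $\mathrm{ddim}\,\mathcal F$ and $\mathrm{dind}\,\mathcal F$. Each $J^{(k)}$ has the wedge form $\gamma^{(k)}\wedge\pi^{(k)}$ and hence rank at most two, so the moment-map image lies in the rank-$2$ coadjoint stratum of $\mathfrak{so}(l)^*\oplus\mathfrak{so}(n-l)^*$, whose generic dimension is $(2l-3)+(2(n-l)-3)=2n-6$ when $2\le l\le n-2$ and $2n-5$ when $l\in\{1,n-1\}$. Adjoining $L_*$ gives
\[
\mathrm{ddim}\,\mathcal F=\begin{cases}2n-5,& 2\le l\le n-2,\\ 2n-4,& l\in\{1,n-1\}.\end{cases}
\]
On rank-$2$ skew matrices only the quadratic Casimir $|J^{(k)}|^2$ survives (higher $\mathrm{ad}^*$-invariants and the Pfaffian vanish identically), so together with $L_*$ they span the Poisson centre of $\mathcal F$ and $\mathrm{dind}\,\mathcal F=3$, respectively $2$. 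In both cases the completeness relation $\mathrm{ddim}\,\mathcal F+\mathrm{dind}\,\mathcal F=2(n-1)=\dim T^*S^{n-1}$ holds, so by the Mishchenko--Fomenko theorem the generic joint level sets of $\mathcal F$ are isotropic invariant tori of dimension $\mathrm{dind}\,\mathcal F$ on which the geodesic flow of $g_*$ is quasi-periodic. Pulling back through $\nu$ finishes the argument.

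The hard part will be the rigorous verification of the rank and functional-independence counts --- in particular, that $L_*$ is functionally independent of $|J^{(1)}|^2$ and $|J^{(2)}|^2$, and that the moment map actually attains its claimed generic rank on $T^*S^{n-1}$ (rather than on $T^*\R^n$). I would handle this either by exhibiting a single generic point at which all relevant differentials are linearly independent, or more conceptually by using the cohomogeneity-one action of $G$ on $S^{n-1}$, whose quotient is parametrised by $r=|\gamma_{I_0}|$: Routh-type reduction of $L_*$ by $G$ yields a one-dimensional Hamiltonian in $r$ and its conjugate momentum, which is automatically Liouville integrable, so the unreduced flow is noncommutatively integrable by the standard reconstruction argument.
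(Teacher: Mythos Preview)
Your proposal is correct and follows essentially the same route as the paper: the paper's argument (contained in the paragraph immediately preceding the theorem) is precisely to pass via Theorem \ref{REDsym} to the geodesic flow of $g_*$, observe the $SO(l)\times SO(n-l)$ isometry, and invoke noncommutative integrability using the Noether integrals \eqref{neter1} together with $L_*$. You supply considerably more detail than the paper does---in particular the explicit Mishchenko--Fomenko dimension count and the independence verification---but the underlying strategy is identical.
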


In particular, the problem of rolling of a dynamically symmetric ball without spinning and twisting over a sphere in three dimension is integrable
(see \cite{BMB}). Generally, for $n\ge 4$, the operator \eqref{spec-op}
is not a physical inertia operator of a multidimensional rigid
body that has the form
\begin{equation}\label{manakov}
\mathbb{I}\omega=J\omega+\omega J,
\end{equation}
where $J$ is a symmetric positive definite matrix
called the \emph{mass tensor}
(e.g., see \cite{FeKo}). However, by taking $l=n-1$ in \eqref{sing}
and the conditions $a_1^2>D$, $2a_na_1>a_1^2+D$, we get the operator \eqref{manakov},
 where
\begin{equation*}
J=\diag(J_1,J_1,\dots,J_1,J_n), \quad J_1=\frac{a_1^2-D}{2}, \quad
J_n=a_1a_n-\frac{a_1^2+D}{2},
\end{equation*}
representing a $SO(n-1)$--symmetric rigid body (\emph{multidimensional Lagrange top} \cite{Be}).

Finally, we mention the case of the integrability of the Veselova problem with a physical inertia operator \eqref{manakov}, where
\[
J=\diag(J_1,\dots,J_1,J_2,\dots,J_2),
\]
 without involving Chaplygin Hamiltonisation
recently obtained in \cite{FaNaMo}. It would be interesting to consider the reduced equations \eqref{REDUCED} for the given inertia operator as well.

\subsection*{Acknowledgments}
We are grateful to Francesco Fasso, Luis C. Garcia-Naranjo, and James Montaldi for providing a preprint \cite{FaNaMo}
based on their talk given at \emph{Dynamics and integrability of nonholonomic and other non-Hamiltonian systems} (January 24-27, 2018, Padova),
to Alain Albouy for a stimulating discussion and pointing to us the references \cite{LA} and \cite{Th} during the conference GDIS 2018 (June 5--9, 2018, Moscow), as well as to the referees for helping us to improve the exposition of the results.
The research was supported by the Serbian Ministry of
Science Project 174020, \emph{Geometry and Topology of Manifolds,
Classical Mechanics and Integrable Dynamical Systems}.

\end{document}